\newcommand{\id}{\mathop{\mathrm{id}}}
\newcommand{\stab}{\mathop{\mathrm{stab}}\nolimits}
\newcommand{\supp}{\mathop{\mathrm{supp}}}
\newcommand{\Sym}{\mathop{\mathrm{Sym}}}
\newtheorem{prop}{Proposition}
\newtheorem{deft}[prop]{Definition}
\begin{document}
\title{Irreducible Rules and Equivalence Classes of One-dimensional Cellular Automata}

\author{
    Martin Schaller\,\orcidlink{0009-0000-1939-8475}\thanks{martin.roman.schaller@gmail.com} \\
    \small{Vienna, Austria}
    \and
    Karl Svozil\,\orcidlink{0000-0001-6554-2802}\thanks{karl.svozil@tuwien.ac.at, \url{http://tph.tuwien.ac.at/~svozil}} \\
    \small{Institute for Theoretical Physics, TU Wien,} \\
    \small{Wiedner Hauptstrasse 8-10/136, 1040 Vienna, Austria}
}

\date{\today}
\maketitle

\begin{abstract}
One-dimensional cellular automata are discrete dynamical systems that operate on an infinite lattice of sites and
are characterized by the locality
and uniformity of their update rule.
Permutations of the state set and isometric transformations of the lattice induce symmetry transformations
on the set of local rules and the set of global maps of cellular automata, resulting
in a partitioning of the set of cellular automata into equivalence classes.
The concept of an irreducible local rule that depends on all
its coordinates is used to analyse the equivalence classes and
results on the number of equivalence classes of irreducible binary local rules and binary global maps are presented.
Finally, another symmetry operator based on the scaling of neighbourhoods is introduced and
the change in the number of equivalence classes is analysed.
\end{abstract}

\section{Introduction}

One-dimensional cellular automata (CAs) are discrete dynamical systems in which
space, time, and state are all taken to be discrete. They consist of a bi-infinite
array of identical cells indexed by the integers, each of which takes its value from
a finite state set (or alphabet). Time evolves in discrete steps according to a fixed,
local update rule that is applied synchronously and in parallel to all cells. This
combination of locality, uniformity, and discreteness makes CAs a natural model
for spatially extended computation and for complex dynamical behaviour arising
from simple microscopic laws.

Formally, a one-dimensional CA is specified by three ingredients: a finite state
set $\mathcal{A}$, a finite neighbourhood $N \subset \mathbb{Z}$, and a local rule
$f : \mathcal{A}^{N} \to \mathcal{A}$. For each site $i \in \mathbb{Z}$, the neighbourhood $N$ is
first translated to $i+N = \{ i+j : j \in N\}$, and the new state of cell $i$ at the
next time step is given by applying $f$ to the tuple of states at positions in
$i+N$. Repeating this procedure in parallel for all cells yields a new
configuration, that is, a new bi-infinite sequence of cell states. In this way, the
local rule $f$ and the neighbourhood $N$ determine a \emph{global map}
$\Phi_{f}^{N} : \mathcal{A}^{\mathbb{Z}} \to \mathcal{A}^{\mathbb{Z}}$, which associates each configuration with
its successor. Translation-invariance of the dynamics is encoded in the fact that
$\Phi_{f}^{N}$ commutes with the (left) shift $\sigma$ on configurations.

Conceptually, it is often advantageous to separate the local and global viewpoints.
On the one hand, the local rule $f$ can be viewed as a finite lookup table, and
its combinatorial structure determines many of the observable behaviours of the
CA. On the other hand, from the topological dynamical systems perspective,
the CA is primarily the global map $\Phi : \mathcal{A}^{\mathbb{Z}} \to \mathcal{A}^{\mathbb{Z}}$. The
Curtis--Hedlund--Lyndon theorem (see, e.g., \cite{DBLP:journals/mst/Hedlund69,Lind_Marcus_1995})
bridges these viewpoints by characterizing CA global maps as exactly those maps
on $A^{\mathbb{Z}}$ that are continuous (with respect to the product topology) and
commute with the shift. Thus any such $\Phi$ admits at least one representation
as $\Phi = \Phi_{f}^{N}$ for a suitable neighbourhood $N$ and local rule $f$.

A key theme in the theory of CAs is that many different syntactic descriptions,
comprising neighbourhoods and local rules, can represent essentially the same
dynamics. This redundancy is especially pronounced when symmetry transformations
of the underlying space and state set are taken into account. In the one-dimensional
case, there are two main sources of symmetry:

\begin{enumerate}
  \item \emph{Permutations of the state set.} Any bijection
        $\nu : \mathcal{A} \to \mathcal{A}$ induces a relabeling of configurations by acting
        on each cell independently. If we simultaneously relabel the entries in
        the lookup table of $f$ in a consistent way, the global dynamics is
        unchanged up to this relabeling. In other words, CAs that differ only
        by such permutations should be regarded as equivalent.

  \item \emph{Geometric transformations of the lattice.} The group of
        isometries of the Euclidean line consists of translations and reflections, see e.g.,~\cite{martin2012transformation}.
        On the discrete lattice $\mathbb{Z}$, these act by integer shifts and by
        reversing the direction of the lattice. A reflection transforms a
        neighbourhood $N$ into $-N$, and a corresponding transformation of
        the local rule yields a CA whose global dynamics is the mirror image of
        the original one. Translations of the lattice correspond to shifting the
        neighbourhood and to conjugating the global map by the shift.
\end{enumerate}

Beyond pure isometries, we will also consider \emph{uniform scaling} of the
lattice: given a neighbourhood $N$, we may form a ``stretched'' neighbourhood
$pN = \{pj : j \in N\}$ for some integer scaling factor $p$. Unlike translations
and reflections, this mapping is not onto, due to the discrete nature of the
lattice. Nevertheless, at the level of global maps one can relate a CA with
neighbourhood $pN$ to another CA with neighbourhood $N$ by a suitable
decimation and interleaving of configurations. This introduces a further
symmetry operator, the \emph{scaling operator}, on the space of CA global maps.

Taken together, permutations of the state set, reflections and translations of
the lattice, and scaling of neighbourhoods generate a rich family of symmetry
operators acting either on local rules or on global maps. These symmetries define
an equivalence relation on the set of all CAs over a fixed alphabet $A$, and
hence partition this set into \emph{equivalence classes}. Two CAs are deemed
equivalent if one can be mapped onto the other by a finite composition of these
symmetry operations. Determining these equivalence classes and their
cardinalities is the principal aim of this work.

Historically, most attention has been restricted to a smaller symmetry group:
permutations of the state set and reflections of the lattice, acting only on local
rules of a fixed arity. In this restricted setting the group is finite, and
classical tools from group action theory (in particular, Burnside's lemma) can
be applied to count equivalence classes. In contrast, once we move to global
maps and include translations, the relevant symmetry group becomes
infinite, while scaling is only a partial mapping of the global rule space on itself,
and new ideas are required.

A central technical device in this paper is the notion of an \emph{irreducible}
local rule. Given $f : \mathcal{A}^{n} \to \mathcal{A}$, we say that $f$ is independent of its
$j$-th argument if changing only the $j$-th entry of the input never changes
the output. Otherwise, $f$ is said to \emph{depend} on index $j$. The
\emph{support} of $f$ is the set of indices on which $f$ depends. We call $f$
\emph{irreducible} if its support has full size $n$, i.e., if it depends on
all arguments.

Many local rules are reducible in this sense: they can be written as the
composition of an irreducible rule of smaller arity with a projection that discards
irrelevant coordinates. At the level of global maps, this reducibility translates
into the familiar fact that the same global map can often be implemented by
different pairs $(N,f)$: one can enlarge the neighbourhood by adding
redundant sites on which $f$ does not actually depend. For instance, the binary
identity map can be realized by infinitely many CAs with different neighbourhoods,
all of which simply copy the state at the central site.

The key observation is that if we insist on representing a global map by a CA
whose local rule is irreducible, then the representation becomes unique. In other
words, every CA global map has a \emph{canonical} description in terms of an
irreducible local rule and a minimal neighbourhood, and this description is
unique up to equality of neighbourhood and local rule. Formally, there is a
bijection between the set of continuous, shift-commuting maps
$\Phi : \mathcal{A}^{\mathbb{Z}} \to \mathcal{A}^{\mathbb{Z}}$ and the set of irreducible CAs over
$\mathcal{A}$. This provides a convenient way to pass back and forth between the
dynamical (global) and combinatorial (local) viewpoints.

When studying equivalence under symmetry transformations, one is naturally
led to consider the \emph{stabilizer} of a local rule (or global map), that is, the
subgroup of symmetries that leave it invariant. For example, in the binary case,
the group generated by state complementation and reflection is isomorphic to the
Klein four-group. Each equivalence class of rules under this group action has an
associated stabilizer subgroup, and we can speak of the \emph{type} of the class,
meaning the conjugacy class of its stabilizer.

For local rules of fixed arity, this perspective is classical: the finite symmetry
group acts on the finite set of local rules, producing a partition into
equivalence classes which can be further refined by type. In the binary case,
these types correspond to different combinations of reflection and complementation
symmetries possessed by the rule. Part of the present work revisits this
setting, but with the crucial restriction to \emph{irreducible} rules, and derives
explicit counts for the number of equivalence classes of each type.

The main novelty of this paper lies in lifting the entire symmetry analysis to
the level of global maps. Here, translations of the lattice cannot be avoided:
even if two local rules are related only by reflection, the induced global maps
are related by a composition of reflection and shift.
The symmetry transformations of permutation, reflection, and shifting form
an infinite group $\mathcal{T}_{\mathcal{A}}$ of symmetry operators acting
on the space of all CA global maps over the alphabet $A$.
Scaling is a further important symmetry operator, although due the discreteness of
the integers, it is not a tranformation but only a partial mapping of the global rule
spac on itself.
The resulting
equivalence relation is strictly coarser than the one induced by permutations and
reflections on local rules, and we show how to exploit the classification of
irreducible rules to understand this richer quotient structure.

Within this framework, this paper makes several contributions:

\begin{itemize}
  \item We formalize a uniform family of symmetry operators on the space of all
        CA global maps over a given alphabet $\mathcal{A}$, arising from permutations
        of $\mathcal{A}$, reflections and translations of the integer lattice, and uniform
        scaling of neighbourhoods. These operators are naturally related to the
        isometries and similarities of the Euclidean line.

  \item We develop a group-theoretical description of the induced equivalence
        relation on global maps, via an infinite group $T_{A}$ of symmetry
        transformations. This allows us to speak systematically about
        equivalence classes of CAs, not just at the level of local rules but at
        the level of the dynamics $\Phi : \mathcal{A}^{\mathbb{Z}} \to \mathcal{A}^{\mathbb{Z}}$.

  \item We exploit the notion of an irreducible local rule. We
        show that every CA global map admits a unique representation by an
        irreducible local rule and neighbourhood, thereby establishing a one-to-one
        correspondence between irreducible CAs and global maps.

  \item For a fixed alphabet and arity, we derive a closed formula for the
        number of irreducible local rules, using an inclusion--exclusion argument
        over subsets of coordinates.

  \item Specializing to the binary case, we determine the equivalence classes of
        irreducible local rules of a given arity under the action of the symmetry
        group generated by state complementation and reflection, and we classify
        these classes by type (stabilizer subgroup).

  \item Using this classification, we compute the number of equivalence classes
        of binary global maps induced by a fixed contiguous neighbourhood, when
        all symmetries given by state permutations, reflection, and shifts are
        taken into account. This yields, in particular, exact counts for small
        neighbourhoods that previously could only be obtained by brute-force
        enumeration.

  \item We introduce a scaling operator on global maps, based on decimating
        and interleaving configurations, and analyse how it further merges
        equivalence classes. Applying this to binary CAs with neighbourhood
        $[-1,1]$ (the so-called \emph{elementary} CAs), we show that the
        256 rules split into 81 equivalence classes under the full symmetry
        group and scaling.
\end{itemize}

Many of these results are new, and others provide structural explanations for
numerical counts that were previously known only from extensive computations.
Throughout, we emphasize how irreducible rules provide a natural canonical
framework in which to phrase and solve these classification problems.

Equivalence classes of one-dimensional CAs under reflection and permutation
have been studied in several foundational works. Wolfram \cite{Wolfram1983}
introduced the class of binary CAs with the symmetric contiguous neighbourhood
of size three, now commonly called \emph{elementary} cellular automata. In
\cite{wolfram-1986}, he gave a table partitioning the 256 elementary rules into
88 equivalence classes under reflection and state permutation. A later
mathematical derivation of the number of equivalence classes of elementary CAs
was provided by Li and Packard \cite{Li1990TheSO}.

Cattaneo et al.\ \cite{CATTANEO19971593} systematically
studied transformations on the one-dimensional CA rule space, including
symmetries of binary rules. Among their results is a general formula (based on
Burnside's lemma) for the number of equivalence classes of binary CAs with an
odd neighbourhood size, under the finite group generated by reflection and
permutations of the state set. Schaller and Svozil \cite{SCHALLER2025100298}
extended this line of work by determining the number of equivalence classes of
two- and three-state CAs for all contiguous neighbourhoods and by classifying
these classes by type (stabilizer subgroup).

Shifting and shift-equivalence of local rules has been investigated by
Acerbi et al.~\cite{10.1007/978-3-540-73001-9_1},
Ruivo et al.\
\cite{RUIVO2018280}, and by Garc{\'i}a-Morales \cite{GARCIAMORALES2013276}, who
studied how shifting the neighbourhood indices can generate further equivalence
between CA rules.
Ruivo et al. introduced also the notion of irreducibility of local rules.
More general symmetry transformations acting on CA rules
have been considered in, for example, Castillo-Ram{\'i}rez and Gadouleau~\cite{CASTILLORAMIREZ2020104533}, and
these ideas have been generalized to CAs defined over arbitrary groups rather
than just $\mathbb{Z}$; see Castillo-Ram{\'i}rez et al.\ \cite{Castillo-Ramirez03072023}.

Symmetry-induced equivalence is only one, very basic, form of CA classification.
More broadly, the classification of CAs by dynamical or computational
properties is a central topic in CA theory, and many frameworks have been
proposed; see the survey by Vispoel et al.\ \cite{VISPOEL2022133074}. A
noteworthy alternative equivalence notion is based on topological conjugacy:
Epperlein \cite{epperlein2015classification, epperlein2017} classified elementary CAs up to
topological conjugacy, obtaining 83 equivalence classes and demonstrating that
this relation identifies deeper structural similarities than simple permutation
and reflection.

The remainder of this paper is organized as follows.
In Section~\ref{sec:CA} we recall the formal definition of a one-dimensional CA in terms of
a local rule and a neighbourhood, and we relate this to the global-map
characterization given by the Curtis--Hedlund--Lyndon theorem. We introduce
notation for the spaces of local rules and global maps over a fixed alphabet and
neighbourhood, and briefly recall Wolfram's coding scheme for enumerating
binary local rules.

Section~\ref{sec:irr} is devoted to irreducible local rules. We define dependence on
coordinates, introduce the support of a rule, and formalize the notion of
irreducibility. We show that every local rule can be written uniquely as the
composition of an irreducible rule with a projection, and we prove that every
global map has a unique representation by an irreducible CA. This yields a
one-to-one correspondence between irreducible CAs and global maps.

In Section~\ref{sec:loc} we focus on symmetry transformations acting on local rules. We
define the actions of state permutations and reflection on the local rule space,
and we recall the resulting group action and equivalence relation on rules of
fixed arity. Restricting attention to binary alphabets, we then derive explicit
formulae for the number of equivalence classes of \emph{irreducible} local rules
of given arity and for the number of classes of each stabilizer type. We also
illustrate the resulting classification for small neighbourhood sizes.

Section~\ref{sec:glob} lifts the symmetry analysis to the level of global maps. We first show
how permutation, reflection, and shift act on global maps and prove that these
actions are well-defined, even when different local rules represent the same
global map. We then analyse the equivalence classes of binary global maps
induced by the same contiguous neighbourhood under the full symmetry group
generated by permutations, reflection, and shifts. In the final part of the section
we introduce the scaling operator, show how it relates neighbourhoods that are
integer multiples of one another, and determine the resulting reduction in the
number of equivalence classes, with particular attention to the elementary
binary CAs.

Section~\ref{sec:extended} places our geometric symmetries in the broader context of topological dynamics, discussing their relationship to topological conjugacy and time-reversal symmetry.

Section~\ref{sec:sum} collects our conclusions and summarizes the main results, emphasizing
the role of irreducible rules as canonical representatives and the impact of
including scaling in the symmetry analysis. We also briefly discuss how our
framework relates to, and complements, other classification approaches such as
topological conjugacy.

\section{Cellular Automata}
\label{sec:CA}

If $A$ is any set, we denote the size (or cardinality) of $A$ by $|A|$.
We write $\mathcal{P}(A)$ for the power set of $A$, the set of all subsets of $A$.
The states of a CA are represented by symbols from a finite set $\mathcal{A}$, also called an alphabet.
A word $w=x_1x_2\ldots x_{m}$ over the alphabet $\mathcal{A}$ is a finite sequence of symbols
from $\mathcal{A}$ juxtaposed, and we write $|w|$ to denote the  the length of the sequence,
that is $|x_1x_2\ldots x_{m}| = m$
(the notation $|.|$ denotes both the size of a set and the length of a word).
We write $\mathcal{A}^m$ for the set of all words of length $m$ over the alphabet $\mathcal{A}$.
We denote the integer interval $\{p, p+1, \ldots, q\}$ by $[p,q]$.
If $q < p$, then $[p,q]$ is the empty set.
A configuration $x$ is a bi-infinite sequence over
the alphabet $\mathcal{A}$, formally defined as a map $\mathbb{Z} \to \mathcal{A}$.
The $i$-th element, $i \in \mathbb{Z}$, of a configuration $x$ is denoted by $x_i$.


\begin{deft}
\label{def:CA}
A one-dimensional CA $A$ is a triple $(\mathcal{A}, N, f)$, where \\
$\mathcal{A}$ is a finite set of states; \\
$N$ is the neighbourhood, a finite set of integers of size $n$; \\
$f$ is the local rule, a function from $\mathcal{A}^n$ to $\mathcal{A}$. \\
Let $N=\{j_1,j_2, \ldots,j_{n}\}$ such that $j_1 < j_2 < \cdots < j_{n}$.
The neighbourhood $N$ and the local rule $f$ induce the global map
$\Phi^A : \mathcal{A}^\mathbb{Z} \rightarrow \mathcal{A}^\mathbb{Z}$ on the set of configurations, defined by
$\Phi^N_f(x)_{i} = f(x_{i+j_1}x_{i+j_{2}}\ldots x_{i+j_{n}})$.
\end{deft}
We will also use the notation $\Phi^N_f$ for the global map, if the state set $\mathcal{A}$ is given.
If the CA is initialised with the configuration $x$, the CA computes in one step
the configuration $\Phi^N_f(x)$.
In case of $\mathcal{A} = \{0,1\}$, the CA, the local rule, and the global map, are all called binary.
The shift map (or operator) on $\mathcal{A}^\mathbb{Z}$ maps a configuration $x$ to the configuration $\sigma(x)$
whose $i$th coordinate is $\sigma(x)_i = x_{i+1}$.
It follows from the definition of the CA, that the global map commutes with the shift operator, that is
$\Phi^N_f(\sigma(x)) = \sigma(\Phi^N_f(x))$.

Let $x$ and $y$ be configurations of $\mathcal{A}^\mathbb{Z}$ and put
\[
d(x,y) = \begin{cases}
2^{-p} & \text{if $ x \ne y$ and $p$ is maximal so that $x_i = y_i$ for $-p \le i \le p$},\\
0 & \text{if $x = y$}.
\end{cases}
\]
The map $d$ is a metric on $\mathcal{A}^\mathbb{Z}$, which makes
$\mathcal{A}^\mathbb{Z}$ a topological space.
The following proposition, see \cite{DBLP:journals/mst/Hedlund69}, shows that CAs can also be defined by topological means.
\begin{prop}[Curtis-Hedlund-Lyndon Theorem]
A map of $\mathcal{A}^\mathbb{Z}$ into $\mathcal{A}^\mathbb{Z}$ is the global map of a CA if and only if
the map is continuous and commutes with the shift operator.
\end{prop}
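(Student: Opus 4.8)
The plan is to prove both directions of the Curtis--Hedlund--Lyndon theorem separately. The forward direction, that every global map $\Phi^N_f$ is continuous and shift-commuting, is the easy half and largely follows from the definitions already given. Shift-commutation is stated in the excerpt as an immediate consequence of Definition~\ref{def:CA}, so I would simply reiterate the one-line verification that $\Phi^N_f(\sigma(x))_i = f(\sigma(x)_{i+j_1}\ldots) = f(x_{i+1+j_1}\ldots) = \Phi^N_f(x)_{i+1} = \sigma(\Phi^N_f(x))_i$. For continuity, I would argue directly from the metric $d$: since $\Phi^N_f(x)_i$ depends only on the finitely many coordinates $x_{i+j_1},\ldots,x_{i+j_n}$, if two configurations agree on a sufficiently large central block $[-p,p]$, then their images agree on a slightly smaller central block (shrunk by $\max_k |j_k|$), so small distance in the domain forces small distance in the codomain.

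The substantive content is the converse: given a continuous, shift-commuting map $\Phi$, I must construct a finite neighbourhood $N$ and a local rule $f$ with $\Phi = \Phi^N_f$. The core idea is that continuity on the compact space $\mathcal{A}^{\mathbb{Z}}$ upgrades to \emph{uniform} continuity, which supplies a single finite radius controlling the whole map. Concretely, consider the coordinate-zero evaluation $x \mapsto \Phi(x)_0$. Continuity of $\Phi$ together with the fact that the projection to coordinate $0$ is continuous means that the set of $x$ mapping to a fixed symbol is open; by compactness I would extract a uniform $m$ such that $\Phi(x)_0$ is determined by the restriction $x|_{[-m,m]}$. This is the crux of the argument, so I would present it carefully: fix $x$, use openness to find a radius $p_x$ such that agreement on $[-p_x,p_x]$ forces agreement of the $0$-th output coordinate, then cover $\mathcal{A}^{\mathbb{Z}}$ by the corresponding basic open sets and pass to a finite subcover, taking $m$ to be the maximum of the finitely many radii.

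Once this uniform window $m$ is in hand, I would set $N = [-m,m]$ and define the local rule $f : \mathcal{A}^{N} \to \mathcal{A}$ by $f(w) = \Phi(\hat{w})_0$, where $\hat{w}$ is any configuration whose restriction to $[-m,m]$ equals the word $w$; the uniform-window property guarantees this is well-defined independently of the chosen extension. It then remains to verify that $\Phi = \Phi^N_f$ at every coordinate $i$, not just at $0$. This is exactly where shift-commutation enters: for arbitrary $i$ I would write $\Phi(x)_i = \sigma^{-i}(\Phi(x))_0 = \Phi(\sigma^{-i}(x))_0 = f\big((\sigma^{-i}x)|_{[-m,m]}\big) = f(x_{i-m}\,x_{i-m+1}\cdots x_{i+m}) = \Phi^N_f(x)_i$, using that the shifted configuration $\sigma^{-i}x$ has central block equal to the block of $x$ centred at $i$.

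The main obstacle I anticipate is the compactness-to-uniformity step in the second paragraph: establishing rigorously that a single finite $m$ works for \emph{all} configurations simultaneously, rather than an $x$-dependent radius. Everything else is bookkeeping, but this step is where one must invoke compactness of $\mathcal{A}^{\mathbb{Z}}$ (equivalently, a finite-subcover or sequential-compactness argument) and handle the indices carefully. I would make sure to justify compactness explicitly, noting that $\mathcal{A}^{\mathbb{Z}}$ is a product of finite discrete spaces and hence compact by Tychonoff, and that the metric $d$ induces precisely the product topology, so the two formulations of the radius condition coincide.
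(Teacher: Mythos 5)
The paper does not actually prove this proposition: it is quoted as a known result, with the proof deferred to Hedlund \cite{DBLP:journals/mst/Hedlund69} and to the exposition in Lind--Marcus \cite{Lind_Marcus_1995}. So there is no in-paper argument to compare against; what you have written is, in substance, the standard textbook proof from those references, and it is essentially correct. The forward direction (shift-commutation immediate from Definition~\ref{def:CA}, continuity because $\Phi^N_f(x)_i$ depends only on coordinates within distance $\max_k |j_k|$ of $i$, giving a Lipschitz bound in the metric $d$) is fine, and your converse correctly isolates the crux: compactness of $\mathcal{A}^{\mathbb{Z}}$ (Tychonoff, as a product of finite discrete spaces, with $d$ inducing the product topology) upgrades the pointwise continuity of $x \mapsto \Phi(x)_0$ to a uniform window $m$ via a finite subcover of cylinder sets, after which $f(w) = \Phi(\hat{w})_0$ is well defined and shift-commutation transports the coordinate-$0$ identity to every coordinate. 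One slip to fix: with the paper's convention $\sigma(x)_i = x_{i+1}$, the final chain needs $\sigma^{i}$ rather than $\sigma^{-i}$, since $\sigma^{i}(\Phi(x))_0 = \Phi(x)_i$ and $(\sigma^{i}x)|_{[-m,m]} = x_{i-m}\cdots x_{i+m}$; as written, $\sigma^{-i}(\Phi(x))_0 = \Phi(x)_{-i}$. Your own verbal description of the central block (``centred at $i$'') matches the corrected exponent, so this is a sign typo tied to the shift convention, not a gap in the argument.
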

A very accessible presentation of this theorem and symbolic dynamics in general can
be found in~\cite{Lind_Marcus_1995}.
We call the set $L_\mathcal{A}(n) = \{f \, | \, f : \mathcal{A}^n \to \mathcal{A} \}$ the local rule space of
arity $n$ over the alphabet $\mathcal{A}$.
If we set $k = |\mathcal{A}|$, the size of $L_{\mathcal{A}}(n)$ is $k^{k^n}$.
The set $G_\mathcal{A}(N) = \{ \Phi^N_f \, | \, f \in L_\mathcal{A}(|N|) \}$ is called the global rule
space on the neighbourhood $N$ over the alphabet $\mathcal{A}$ .
Wolfram~\cite{wolfram-1986} introduced a commonly used coding schema to enumerate the local rule space $L_{\{0,1\}}(n)$.
Let $f \in L_{\{0,1\}}(n)$ be a binary local rule.
The Wolfram code $c(f)$ of $f$ is the nonnegative integer
\[
c(f) = \sum_{a_1\ldots a_{n} \in \{0,1\}^n} f(a_1\ldots a_{n}) \times 2^{\sum_{i=1}^{n} a_i 2^{n-i}}.
\]
This summation uses the left-to-right most significant bit (MSB) convention.
If $f: \{ 0, 1 \}^n \to \{ 0, 1 \} $ is a local rule,
we define $W^n_{c(f)} = f$.
We denote the constant functions $f() = 0$ and $g() = 1$, both of arity 0, by
$W^0_0$ and $W^0_1$, respectively.
If this coding schema is used in combination with a global map, the arity of $f$ can be derived
from the neighbourhood, for instance, we can write
$\Phi^{[-1,1]}_{W_{110}}$ instead of $\Phi^{[-1,1]}_{W^3_{110}}$.

A global map of a CA can have different representations in terms of
local rules and neighbourhoods.
The binary identity map $\id: \{0,1\}^\mathbb{Z} \to \{0,1\}^\mathbb{Z}$,
$\id(x) = x$ serves as a trivial example.
For each neighbourhood $N$ that contains the integer 0,
the map $\id$ has a different representation, so that for instance
\[
\id = \Phi^{\{0\}}_{W_2} = \Phi^{[0,1]}_{W_{12}} = \Phi^{[-1,1]}_{W_{105}}.
\]
More general, if $N_2 \subseteq N_1$ then $G_\mathcal{A}(N_2) \subseteq G_\mathcal{A}(N_1)$.
In the next subsection we will see that each global map of a CA
has a unique canonical representation.

\section{Irreducibility}
\label{sec:irr}
A local rule $f : \mathcal{A}^n \to \mathcal{A}$ is said to be dependent on index $j$, $1 \le j \le n$,
if there is a word $a_1\ldots a_{j-1}a_{j+1}\ldots a_{n} \in \mathcal{A}^{n-1}$ and
symbols $b, c \in \mathcal{A}$ such that
$f(a_1\ldots a_{j-1}ba_{j+1}\ldots a_{n}) \ne f(a_1\ldots a_{j-1}ca_{j+1}\ldots a_{n})$,
otherwise $f$ is said to be independent of the index $j$.
The local rule $f: \mathcal{A}^n \to \mathcal{A}$ is said to be irreducible
if $f$ depends on all its indices, otherwise
$f$ is said to be reducible.
A CA $(\mathcal{A}, N, f)$ is said to be irreducible or reducible, if its local rule $f$ is
irreducible or reducible, respectively.
Define
 \[
\supp(f) = \{ j \, | \, \mbox{$f$ depends on coordinate $j$} \}.
\]
Then the local rule $f: \mathcal{A}^n \to \mathcal{A}$ is irreducible if and only if
$|\supp(f)| = n$.

A map $h: \mathcal{A}^n \to \mathcal{A}^m$, $m \le n$, is called a projection if
there are integers $p_1, \ldots, p_{m}$ with $1 \leq p_1 < \ldots < p_{m} \leq n$ such that
$h(a_1\ldots a_{n}) = a_{p_1}\ldots a_{p_{m}}$ for all words $a_1\ldots a_{n} \in \mathcal{A}^n$.
The set $\{p_1, \ldots, p_{m}\}$ is called the index set of the projection.
The map $H$ associates a projection with its index set, that is $H(h) = \{p_1, \ldots, p_{m}\}$.
The number of distinct projections $h : \mathcal{A}^n \to \mathcal{A}^m$ is given by
the binomial coefficient $\binom{n}{m}$.
If $f: \mathcal{A}^n \to \mathcal{A}$ is a local rule and $|\supp(f)| = m$,
then there is an irreducible local rule $g: \mathcal{A}^m \to \mathcal{A}$ and a projection
$h : \mathcal{A}^n \to \mathcal{A}^m$ such
that $f = g \circ h$.
This representation is unique.
This means that if $f = g^\prime \circ h^\prime$, $g^\prime$ is irreducible, and $h^\prime$ is a projection,
then $g = g^\prime$ and $h = h^\prime$.
Define $\overline{L}_\mathcal{A}(n) = \{f \in L_\mathcal{A}(n) \, | \, \mbox{$f$ is irreducible}  \}$ and
$\overline{G}_\mathcal{A}(N) = \{ \Phi_f^N \ | \ f \in \overline{L}_\mathcal{A}(|N|) \}$.
\begin{prop}
\label{prop:uniqueness}
If $A=(\mathcal{A}, N_1, f_1)$ and $B=(\mathcal{A}, N_2, f_2)$ are two irreducible CAs,
then $\Phi^A = \Phi^B$ if and only if $A=B$, that is if $N_1 = N_2$ and $f_1 = f_2$.
\end{prop}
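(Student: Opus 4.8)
The direction $A=B \Rightarrow \Phi^A=\Phi^B$ is immediate from the definition of the global map. The substance lies in the converse, and the plan is to recover the neighbourhood and the local rule \emph{directly from the global map}, via an invariant that makes no reference to any particular representation. For a global map $\Phi$ I would define its dependency set at the origin,
\[
D(\Phi) = \{\, k \in \mathbb{Z} \mid \exists\, x,y \in \mathcal{A}^\mathbb{Z} \text{ with } x_j = y_j \text{ for all } j \ne k \text{ and } \Phi(x)_0 \ne \Phi(y)_0 \,\},
\]
that is, the set of input coordinates that can actually change the output symbol at position $0$. This is manifestly a function of $\Phi$ alone, so from the hypothesis $\Phi^A = \Phi^B$ one obtains $D(\Phi^A) = D(\Phi^B)$ for free.

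The key step is to show that for any irreducible CA $(\mathcal{A}, N, f)$ one has $D(\Phi^N_f) = N$. Writing $N = \{j_1 < \cdots < j_n\}$, the value $\Phi^N_f(x)_0 = f(x_{j_1}\ldots x_{j_n})$ depends only on the entries at positions in $N$, which gives the inclusion $D(\Phi^N_f) \subseteq N$. For the reverse inclusion I would invoke irreducibility: for each $l$, since $f$ depends on its $l$-th coordinate there is a word $a_1\ldots a_{l-1}a_{l+1}\ldots a_n$ and symbols $b,c$ witnessing $f(a_1\ldots b \ldots a_n) \ne f(a_1\ldots c \ldots a_n)$; placing these entries at the sites $j_1,\ldots,j_n$ and letting the two configurations agree everywhere else yields $x,y$ that differ only at position $j_l$ and satisfy $\Phi^N_f(x)_0 \ne \Phi^N_f(y)_0$, so $j_l \in D(\Phi^N_f)$. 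Hence $N \subseteq D(\Phi^N_f)$, and the two sets coincide.

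Combining these observations gives $N_1 = D(\Phi^A) = D(\Phi^B) = N_2 =: N$, so the neighbourhoods agree (in particular they have the same size). It then remains to recover the local rule. Since the neighbourhoods now coincide, every word $a_1\ldots a_n \in \mathcal{A}^n$ is realised at the neighbourhood positions by some configuration $x$, obtained by setting $x_{j_l} = a_l$ for each $l$ and filling the remaining coordinates arbitrarily. Evaluating the equality $\Phi^A = \Phi^B$ at such an $x$ yields $f_1(a_1\ldots a_n) = \Phi^N_{f_1}(x)_0 = \Phi^N_{f_2}(x)_0 = f_2(a_1\ldots a_n)$, and since the word was arbitrary we conclude $f_1 = f_2$, completing the argument.

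I expect the main obstacle to be the second step, proving $N \subseteq D(\Phi^N_f)$ cleanly: one must convert the coordinate-wise dependence of $f$, a statement about words of length $n$, into a statement about bi-infinite configurations differing in a single site, while carefully keeping track of the indexing that matches the $l$-th argument of $f$ with the lattice position $j_l$. Once the dependency set is recognised as a representation-independent invariant of $\Phi$, the remaining deductions are routine.
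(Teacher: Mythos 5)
Your proof is correct, and at its core it relies on the same construction as the paper's: given a dependence witness for the local rule at index $l$, build two configurations that agree everywhere except at the lattice site $j_l$ and whose images under the global map differ at the origin. The difference is organizational. The paper argues by contraposition: assuming $N_1 \neq N_2$, it picks a single $j_r \in N_1 \setminus N_2$, applies this construction once to $f_1$ to obtain configurations $x,y$ with $\Phi^A(x)_0 \neq \Phi^A(y)_0$, and notes that $\Phi^B(x)_0 = \Phi^B(y)_0$ because $x$ and $y$ agree on $N_2$. In your language these two observations are exactly $j_r \in D(\Phi^A)$ and $D(\Phi^B) \subseteq N_2$; your proposal promotes this implicit invariant to an explicit one and proves the full characterization $D(\Phi^N_f) = N$ for every irreducible CA, after which equality of the neighbourhoods is a formal consequence and equality of the rules follows by evaluating at configurations realizing arbitrary words (this last step is the same as the paper's remark for the case $N_1 = N_2$). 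What your packaging buys is a representation-independent description of the minimal neighbourhood of a global map, which is slightly more than the proposition asks for and meshes nicely with the subsequent claim (Proposition~\ref{prop:eq-cont}) that every continuous shift-commuting map has a unique irreducible representation; what the paper's version buys is economy, since it needs the configuration construction at only one site of one rule rather than at all sites of both.
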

\begin{proof}
If $N_1 = N_2$, then $\Phi^A = \Phi^B$ implies $f_1 = f_2$.
We therefore assume that $N_1 \neq N_2$.
Consequently, $N_1 \setminus N_2$ is nonempty, or $N_2 \setminus N_1$ is nonempty.
Without loss of generality, suppose that $N_1 \setminus N_2$ is nonempty.
Let $N_1 = \{j_1, \ldots, j_{n} \}$ and suppose $j_r \in N_1 \setminus N_2$.
The irreducibility of $f_1$ implies that there are words
$w_1 = a_1 \ldots a_{r-1}ba_{r+1} \ldots a_{n}$ and
$w_2 =a_1 \ldots a_{r-1}ca_{r+1} \ldots a_{n}$ in $\mathcal{A}^n$
such that $f_1(w_1) \neq f_1(w_2)$.
Construct two configurations $x$ and $y$ as follows.
Fix a symbol $d \in \mathcal{A}$.
If $i \not\in N_1$ set $x_i = y_i = d$.
If $j_i \in N_1$ and $j_i \neq j_r$ set $x_{j_i} = y_{j_i} = a_i$.
Set $x_{j_r} = b$ and $y_{j_r} = c$.
From the construction of $x$ and $y$ follows that $\Phi^A(x)_0 = f_1(w_1) \neq f_1(w_2) = \Phi^A(y)_0$ and
$\Phi^B(x)_0 = \Phi^B(y)_0$.
Thus $\Phi^A(x) \neq \Phi^B(x)$ or $\Phi^A(y) \neq \Phi^B(y)$.
\end{proof}
Proposition~\ref{prop:uniqueness} in conjunction with the Curtis-Hedlund-Lyndon Theorem imply the next proposition.
\begin{prop}
\label{prop:eq-cont}
If  $\Psi: \mathcal{A}^\mathbb{Z} \to \mathcal{A}^\mathbb{Z}$ is a continuous map that commutes with the shift operator $\sigma$,
then there is exactly one irreducible CA $A$ such that $\Psi = \Phi^A$.
\end{prop}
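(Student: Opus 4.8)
The plan is to split the claim into existence and uniqueness and to obtain each from a result already in hand. Uniqueness is immediate from Proposition~\ref{prop:uniqueness}: if $A$ and $B$ are irreducible CAs with $\Psi = \Phi^A$ and $\Psi = \Phi^B$, then $\Phi^A = \Phi^B$, whence $A = B$. So at most one irreducible CA can represent $\Psi$, and the real content lies in producing at least one.

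For existence, I would first invoke the Curtis--Hedlund--Lyndon theorem. Since $\Psi$ is continuous and commutes with $\sigma$, it is the global map of \emph{some} CA, say $\Psi = \Phi^N_f$ for a triple $(\mathcal{A}, N, f)$. This representation need not be irreducible, so the task is to pass to an irreducible one without altering the global map, by discarding the sites on which $f$ does not actually depend.

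The key step uses the decomposition established earlier in this section. Write $N = \{j_1 < \cdots < j_n\}$, set $m = |\supp(f)|$, and let $\supp(f) = \{p_1 < \cdots < p_m\}$. By that decomposition, $f = g \circ h$ for an irreducible rule $g : \mathcal{A}^m \to \mathcal{A}$ and the projection $h$ with index set $\{p_1, \ldots, p_m\}$. I then define the reduced neighbourhood $N' = \{j_{p_1}, \ldots, j_{p_m}\}$; since both $(j_i)$ and $(p_k)$ are strictly increasing, $N'$ is a genuine neighbourhood of size $m$. Checking directly from the definition of the global map, for every configuration $x$ and index $i$ one has
\[
\Phi^N_f(x)_i = f(x_{i+j_1}\ldots x_{i+j_n}) = g(x_{i+j_{p_1}}\ldots x_{i+j_{p_m}}) = \Phi^{N'}_g(x)_i,
\]
so that $\Phi^{N'}_g = \Phi^N_f = \Psi$ and $(\mathcal{A}, N', g)$ is an irreducible CA representing $\Psi$.

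I expect the main obstacle to be purely one of bookkeeping rather than of concept: one must verify that applying the projection $h$ at the level of the local rule corresponds exactly to deleting the non-support sites from the translated neighbourhood $i+N$, and that this correspondence holds uniformly in $i$ so that the two global maps agree coordinate-by-coordinate. Once this index-tracking is carried out carefully, the existence argument together with the uniqueness from Proposition~\ref{prop:uniqueness} yields that exactly one irreducible CA $A$ satisfies $\Psi = \Phi^A$.
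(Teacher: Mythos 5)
Your proposal is correct and follows essentially the same route as the paper, which derives this proposition directly from Proposition~\ref{prop:uniqueness} together with the Curtis--Hedlund--Lyndon theorem; your existence step via the decomposition $f = g \circ h$ and the reduced neighbourhood $N' = \{j_{p_1},\ldots,j_{p_m}\}$ is precisely the detail the paper leaves implicit (and uses explicitly later, in the well-definedness proof for the symmetry operators).
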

We denote the set of all continuous maps on $\mathcal{A}^\mathbb{Z}$ that commute with the shift operator by
$\mathcal{C}_\mathcal{A}$.
It is equal to the set of all global maps induced by CAs over the alphabet $\mathcal{A}$, and
there is a one-to-one correspondence between this set and all irreducible CAs over $\mathcal{A}$.
\begin{prop}
If $k=|\mathcal{A}|$, the size of $\overline{L}_\mathcal{A}(n)$ is given by $\sum_{i=0}^n (-1)^{n-i} \binom{n}{i} k^{k^i}$.
\end{prop}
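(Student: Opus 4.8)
The plan is to reduce the count to a recurrence in $n$ by partitioning the full local rule space $L_\mathcal{A}(n)$ according to support, and then to solve that recurrence by binomial inversion. Write $I(m) := |\overline{L}_\mathcal{A}(m)|$ for the number of irreducible rules of arity $m$; the goal is the closed form for $I(n)$.

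First I would partition $L_\mathcal{A}(n)$ according to the support of its elements, grouping the parts by cardinality of the support. The key step is to show that for a fixed subset $S \subseteq \{1,\ldots,n\}$ with $|S| = m$, the number of rules $f \in L_\mathcal{A}(n)$ with $\supp(f) = S$ is exactly $I(m)$, independently of which $m$-element set $S$ is chosen. This is precisely where the unique decomposition $f = g \circ h$ (with $g$ irreducible of arity $m$ and $h$ a projection) proved above does the work. Given any $f$ with $\supp(f) = S$, the uniqueness forces the projection $h$ to have index set $S$; and since $g$ is irreducible it depends on all $m$ of its arguments, one checks that $\supp(g \circ h) = H(h)$, so the factoring assigns to each such $f$ an irreducible $g$ of arity $m$, and conversely each irreducible $g$ composed with the projection onto $S$ produces a rule with support exactly $S$. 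Hence $f \mapsto g$ is a bijection between $\{f \in L_\mathcal{A}(n) : \supp(f) = S\}$ and $\overline{L}_\mathcal{A}(m)$, giving the desired count $I(m)$. I expect this bijection — in particular verifying $\supp(g\circ h)=H(h)$ and invoking uniqueness to pin down $h$ — to be the main (though still short) obstacle; the rest is bookkeeping.

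Summing over all subsets $S$ and using $|L_\mathcal{A}(n)| = k^{k^n}$ then yields the identity
\[
k^{k^n} \;=\; \sum_{S \subseteq \{1,\ldots,n\}} I(|S|) \;=\; \sum_{m=0}^{n} \binom{n}{m} I(m),
\]
since there are $\binom{n}{m}$ subsets of size $m$. This expresses the known quantities $k^{k^n}$ as the binomial transform of the unknowns $I(m)$. Finally I would apply binomial inversion: from $a_n = \sum_{m=0}^n \binom{n}{m} b_m$ one recovers $b_n = \sum_{m=0}^n (-1)^{n-m}\binom{n}{m} a_m$, the inversion resting on the standard identity $\sum_{m=j}^{n} (-1)^{n-m}\binom{n}{m}\binom{m}{j} = [\,n=j\,]$, which follows from $\binom{n}{m}\binom{m}{j} = \binom{n}{j}\binom{n-j}{m-j}$ together with the binomial theorem. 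Taking $a_m = k^{k^m}$ and $b_m = I(m)$ gives $I(n) = \sum_{i=0}^{n} (-1)^{n-i}\binom{n}{i} k^{k^i}$, as claimed. As a sanity check one may note that the cases $n=0$ (constant rules, $I(0)=k$) and $n=1$ (non-constant unary rules, $I(1)=k^k-k$) come out correctly.
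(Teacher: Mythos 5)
Your proposal is correct and follows essentially the same route as the paper: both establish the recurrence $k^{k^n}=\sum_{m=0}^{n}\binom{n}{m}|\overline{L}_\mathcal{A}(m)|$ via the unique decomposition $f=g\circ h$ into an irreducible rule and a projection, and both then obtain the closed form by binomial inversion resting on the identity $\sum_{m=j}^{n}(-1)^{n-m}\binom{n}{m}\binom{m}{j}=\delta_{nj}$. The only cosmetic differences are that you derive the inverted formula directly (and prove the orthogonality identity) whereas the paper substitutes the asserted formula into the recurrence and cites the identity, and that you spell out the support-partition bijection the paper compresses into one sentence.
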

\begin{proof}
Fix the alphabet $\mathcal{A}$ and put $k = |\mathcal{A}|$.
Set $a_n = |L_\mathcal{A}(n)| = k^{k^n}$ and $b_n = |\overline{L}_\mathcal{A}(n)|$.
Let $j \leq n$ and $f \in \overline{L}_\mathcal{A}(j)$.
Then $f$ contributes $\binom{n}{j}$ times to $a_n$.
We obtain the recurrence relation
\[
a_n = \sum_{j=0}^n \binom{n}{j} b_j.
\]
We plug the asserted equation
\[
b_j = \sum_{i=0}^j (-1)^{j-i} \binom{j}{i} a_i
\] into this relation and get
\begin{eqnarray*}
a_n = \sum_{j=0}^n \binom{n}{j} \sum_{i=0}^j (-1)^{j-i} \binom{j}{i} a_i
= \sum_{i=0}^n a_i \sum_{j=i}^n (-1)^{j-i} \binom{n}{j} \binom{j}{i}.
\end{eqnarray*}
The inner sum is known to satisfy the identity $\sum_{j=i}^n (-1)^{j-i} \binom{n}{j} \binom{j}{i} = \delta_{in}$, where $\delta_{in}$ is the Kronecker delta (equal to 1 if $i=n$, and 0 otherwise) \cite[p.~169]{Graham94}.
Thus the right hand side collapses to $a_n$, verifying the formula.
\end{proof}

\begin{table}
\begin{center}
\caption{Number of binary local rules.}
\label{tab:irr}
\begin{tabular}{|l|c|c|c|c|c|c|} \hline
$n$       & 0 & 1 & 2  & 3   & 4       & 5 \\ \hline
$|\overline{L}_2(n)|$   & 2 & 2 & 10 & 218 & 64\,594 & 4\,294\,642\,034\\ \hline
$|L_2(n)|$              & 2 & 4 & 16 & 256 & 65\,536 & 4\,294\,967\,296\\ \hline
\end{tabular}
\end{center}
\end{table}

The cardinalities of the binary local rule spaces for a small number of arguments are
depicted in Table~\ref{tab:irr}.
Let $H_\mathcal{A}(n, m)$ denote the set of projections $h : \mathcal{A}^n \to \mathcal{A}^m$.
The definition of an irreducible function implies the first part of the following proposition,
which takes advantage of the fact that a function $f: \emptyset \to \mathcal{A}$ is constant.
The other two statements are simple conclusions from Proposition~\ref{prop:uniqueness}.
\begin{prop}
\label{prop:props}
Suppose that $n \ge 0$, and $N$ is a finite set of integers.
Then
\begin{enumerate}[label=(\roman*)]
\item
$
L_\mathcal{A}(n) = \bigcup_{0 \le m \le n} \{ g \circ h \ | \ g \in \overline{L}_\mathcal{A}(m), h \in H_\mathcal{A}(n,m) \};
$
\item
$
G_\mathcal{A}(N) = \bigcup_{M\subseteq N} \overline{G}_\mathcal{A}(M);
$
\item
$
G_\mathcal{A}(N_1) \cap G_\mathcal{A}(N_2) = G_\mathcal{A}(N_1 \cap N_2).
$
\end{enumerate}
\end{prop}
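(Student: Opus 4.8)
The plan is to treat the three parts in order, using (i) as the engine for (ii) and the uniqueness of the canonical representation (Propositions~\ref{prop:uniqueness} and~\ref{prop:eq-cont}) as the engine for (iii).

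For (i), the inclusion $\supseteq$ is immediate, since any composite $g \circ h$ with $g \in \overline{L}_\mathcal{A}(m)$ and $h \in H_\mathcal{A}(n,m)$ is a function $\mathcal{A}^n \to \mathcal{A}$, hence lies in $L_\mathcal{A}(n)$. For $\subseteq$, I would take an arbitrary $f \in L_\mathcal{A}(n)$, set $m = |\supp(f)|$, and invoke the unique decomposition $f = g \circ h$ with $g$ irreducible of arity $m$ and $h$ a projection, recorded just before Proposition~\ref{prop:uniqueness}. The only point requiring care is the boundary case $m = 0$: when $f$ depends on none of its arguments it is constant, and one checks that such an $f$ is the composite of the arity-$0$ irreducible rule (a constant on the empty tuple) with the unique projection $\mathcal{A}^n \to \mathcal{A}^0$. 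This is exactly the content of the remark that a map $f : \emptyset \to \mathcal{A}$ is constant.

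For (ii), I would combine (i) with the bookkeeping that links the index set of a projection to a sub-neighbourhood. Write $N = \{j_1, \ldots, j_n\}$ in increasing order. Given $\Phi^N_f \in G_\mathcal{A}(N)$, decompose $f = g \circ h$ as in (i) with $H(h) = \{p_1, \ldots, p_m\}$, and put $M = \{j_{p_1}, \ldots, j_{p_m}\} \subseteq N$. Directly from the definition of the global map, $\Phi^N_f$ reads only the cells offset by members of $M$, so $\Phi^N_f = \Phi^M_g \in \overline{G}_\mathcal{A}(M)$, which gives $\subseteq$. Conversely, for any $M \subseteq N$ and irreducible $g$ on $M$, padding $g$ with dummy coordinates exhibits $\Phi^M_g$ as an element of $G_\mathcal{A}(N)$; this is precisely the monotonicity $M \subseteq N \Rightarrow G_\mathcal{A}(M) \subseteq G_\mathcal{A}(N)$ already noted in Section~\ref{sec:CA}, and yields $\supseteq$.

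Part (iii) is where the real work lies, and here the uniqueness of the canonical representation is the crux. The inclusion $\supseteq$ is routine monotonicity: since $N_1 \cap N_2$ is contained in both $N_1$ and $N_2$, we get $G_\mathcal{A}(N_1 \cap N_2) \subseteq G_\mathcal{A}(N_1) \cap G_\mathcal{A}(N_2)$. For $\subseteq$, take $\Phi \in G_\mathcal{A}(N_1) \cap G_\mathcal{A}(N_2)$. By Proposition~\ref{prop:eq-cont}, $\Phi$ has a single canonical irreducible representation $\Phi = \Phi^M_g$. Applying (ii) to $N_1$ gives $\Phi \in \overline{G}_\mathcal{A}(M_1)$ for some $M_1 \subseteq N_1$; but the irreducible representation is unique (Proposition~\ref{prop:uniqueness}), so $M_1 = M$ and hence $M \subseteq N_1$. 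The same argument with $N_2$ gives $M \subseteq N_2$, whence $M \subseteq N_1 \cap N_2$, and therefore $\Phi = \Phi^M_g \in \overline{G}_\mathcal{A}(M) \subseteq G_\mathcal{A}(N_1 \cap N_2)$. I expect this last step to be the main obstacle: without the uniqueness of the canonical neighbourhood one could not conclude that the support of $\Phi$ fits inside $N_1 \cap N_2$, since a priori the realizations of $\Phi$ on $N_1$ and on $N_2$ might use unrelated coordinate sets. It is the rigidity supplied by Proposition~\ref{prop:uniqueness}, forcing every representation back to the one canonical pair $(M, g)$, that pins $M$ simultaneously inside $N_1$ and $N_2$ and makes the intersection formula hold.
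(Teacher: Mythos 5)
Your proof is correct and follows essentially the same route the paper intends: part (i) from the unique decomposition $f = g \circ h$ of a rule into an irreducible rule and a projection (with the constant-rule case $m=0$ handled via maps on the empty tuple), and parts (ii) and (iii) as consequences of the uniqueness of the irreducible representation (Propositions~\ref{prop:uniqueness} and~\ref{prop:eq-cont}). The paper states these implications in one sentence without spelling them out; your write-up supplies exactly the intended details, including the key rigidity argument in (iii) that pins the canonical neighbourhood inside both $N_1$ and $N_2$.
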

\section{Local Rules}
\label{sec:loc}

\subsection{Symmetry Transformations}
\label{sec:loc-symm}

We now describe two transformations on the local rule space, one is based
on the permutation of the state set, the other one results from a reflection of a configuration.

Let $S_\mathcal{A}$ be the symmetric group of degree $|\mathcal{A}|$,
that is the set of all permutations of the set $\mathcal{A}$,
and suppose $\nu \in S_\mathcal{A}$.
If $a \in \mathcal{A}$ we write the image of $a$ under $\nu$ as product $\nu a$.
The extension of $\nu$ to words is defined by elementwise application.
If $w = a_1\ldots a_{n} \in \mathcal{A}^n$ is a word,
set $\nu w  = (\nu a_1) \ldots (\nu a_{n})$.
Suppose $f$ is a local rule from $\mathcal{A}^n$ to $\mathcal{A}$.
The permutation operator $\hat{\nu}$ is
defined by
$\hat{\nu}f(w) = \nu f ( \nu^{-1} w)$ for all words $w \in \mathcal{A}^n$.
Note that the  ``hat'' on the operator is necessary, because $\nu f$ and
$\hat{\nu} f$ are distinct entities.
The first one is the composite function $\nu \circ f$, whereas the second represents
the composite function $\nu \circ f \circ \nu^{-1}$.

If $w = a_1\ldots a_{n} \in \mathcal{A}^n$ is a word,
define the reflection of $w$ by $rw  = a_{n} \ldots a_1$.
Let $f$ be a local rule.
The reflection operator $\hat{r}$ is defined by $\hat{r} f(w) = f(rw)$.
Since $r$ is self-inverse ($r^{-1} = r$),
and $ra = a$ for all symbols $a \in \mathcal{A}$, we can also write $\hat{r} f(w) = rf(r^{-1}w)$,
which makes the notation consistent with that of the permutation operator.

Both operators, the permutation and the reflection operator, transform the rule
space $L_\mathcal{A}(n)$.
Because both of the operators map an irreducible local rule to another irreducible rule,
the operators can also be restricted to the rule space $\overline{L}_\mathcal{A}(n)$.
The set $R = \{1, r\}$, is called the reflection group.
The direct product of $R$ and $S_\mathcal{A}$, written as $RS_\mathcal{A}$, is the group that contains all permutations,
the reflection and their products.
We denote the group $RS_\mathcal{A}$ by  $\mathcal{S}_\mathcal{A}$.

Suppose that $\alpha$ and $\beta$ are in $\mathcal{S}_\mathcal{A}$.
Then
\[
\hat{\alpha} \hat{\beta} f (w)  = \alpha \hat{\beta} f(\alpha^{-1} w)
= \alpha \beta f( \beta^{-1} \alpha^{-1} w) = \widehat{\alpha \beta} f(w).
\]
The operators form a group that is in general isomorphic to $\mathcal{S}_\mathcal{A}$,
but for $n=1$ (or $k=1$) the relation is only a homomorphism.
Note that the reflection operator commutes with all permutation operators.
If $\alpha \in \mathcal{S}_\mathcal{A}$, and $f$ is a local rule such that $\hat{\alpha} f = f$,
the local rule $f$ is said to be invariant under the symmetry transformation $\alpha$.

A group action of a group $G$ on a set $A$ is a map of $G \times A$ into $A$
satisfying the following properties, see e.g.~\cite{rotman2012introduction}:
\begin{enumerate}[label=(\roman*)]
\item $g_1(g_2a) = (g_1g_2)a$ for all $g_1,g_2 \in G$, $a \in A$, and
\item $1a=a$, for all $a \in A$.
\end{enumerate}

The map $\mathcal{S}_\mathcal{A} \times L_\mathcal{A}(n) \to L_\mathcal{A}(n)$; $(\hat{\alpha}, f) \mapsto \hat{\alpha} f$
fulfils the properties of a group action.
The relation on $L_\mathcal{A}(n)$, defined by $f \sim g$ if and only if $f = \hat{\alpha} g$ for some
$\alpha \in \mathcal{S}_\mathcal{A}$, is an equivalence relation.
The equivalence classes $[f] = \{ \hat{\alpha}f \, | \, \alpha \in \mathcal{S}_\mathcal{A} \}$ form a partition
of the rule space $L_\mathcal{A}(n)$.
The set of all equivalence classes of $L_\mathcal{A}(n)$ is denoted by $L_\mathcal{A}(n) / \mathcal{S}_\mathcal{A}$.

Let $U$ be a subgroup of $\mathcal{S}_\mathcal{A}$ (denoted $U \le \mathcal{S}_\mathcal{A}$).
The conjugacy class of $U$ is the set of subgroups
$[U] = \{ V \le \mathcal{S}_\mathcal{A} \, | \, \mbox{ $V = \alpha U \alpha^{-1}$ for some $\alpha \in \mathcal{S}_\mathcal{A}$} \}$.
The stabilizer of a local rule $f \in L_\mathcal{A}(n)$ is the set
$\stab(f) = \{ \alpha \in \mathcal{S}_\mathcal{A} \ | \alpha f = f \}$.
An equivalence class $[f] \in L_\mathcal{A} / \mathcal{S}_\mathcal{A}$ is said to be of
type $[U]$ if the stabilizer of some $g$ in $[f]$ belongs to $[U]$.

If $\mathcal{A} = \{0,1\}$, we write $S_2$ for $S_{\{0,1\}}$, $L_2(n)$ for $L_{\{0,1\}}(n)$, and so on for the other notation
we introduced so far.
The group $S_2$ consists of the two elements $1$ and $c$, where $1$ is the identity, and $c = (01)$ is the transposition of
0 and 1.
The group $\mathcal{S}_2$ is Abelian, consists of the four elements $1$, $r$, $c$, and $rc$, and is isomorphic to the Klein four-group.
For an Abelian group, the conjugacy classes of subgroups are singletons, and we can identify the type of the group with
the group itself.
Hence the types of $\mathcal{S}_2$ are the subgroups of $\mathcal{S}_2$ which are $\langle 1 \rangle$, $\langle r \rangle$, $\langle c \rangle$,
$\langle rc \rangle$, and $\langle r, c \rangle = \mathcal{S}_2$.
The notation $\langle \alpha, \beta, \ldots \rangle$ is called a generator, and denotes the group with
the property that every element of the group can be written as finite
product of the elements of the generator and their inverses.
Define
\[
t^U_n = \{ [f] \in L_2(n) / \mathcal{S}_2 \, | \, \stab(f) = U \}
\]
for $U \le \mathcal{S}_2$.
The set $t^U_n$ consists of all equivalence classes of $L_2(n)$ that have type $U$.
The sets $t^U_n$, $U \le \mathcal{S}_2$, form a partition of $L_2(n) / \mathcal{S}_\mathcal{A}$, so
\[
L_2(n) / \mathcal{S}_2 = \bigcup_{U \le \mathcal{S}_2} t^U_n.
\]
The calculation of the number of equivalence classes
$|L_2(n) / \mathcal{S}_2|$ and of the numbers $|t^U_n|$, $U \le \mathcal{S}_2$,
can be found in \cite{SCHALLER2025100298}, where groups, group actions, and related concepts in the context of CAs
are also presented in more detail.

\subsection{Irreducible Binary Rules}

The restriction of the group action to
$\mathcal{S}_2 \times \overline{L}_2(n) \to \overline{L}_2(n)$ is well-defined.
The group action leads to the equivalence classes $\overline{L}_2(n) / \mathcal{S}_2$
and the classification by types
\[
\overline{t}^U_n = \{ [f] \in \overline{L}_2(n) / \mathcal{S}_2 \, | \, \stab(f) = U \}
\]
 for $U \le \mathcal{S}_2$.
We will now derive the number of equivalence classes of $\overline{L}_2(n) / \mathcal{S}_2$ and the numbers
$|\overline{t}^U_n|$, $U \le \mathcal{S}_2$,
from the numbers $|t^U_n|$, $U \le \mathcal{S}_2$.

For any projection $h$, we define $\hat{r} h = r \circ h \circ r$.
The projection $h$ is said to be reflection-invariant if $\hat{r}h = h$.
Let $g \circ h$ be a local rule,  where $g$ is an irreducible local rule and $h$ is a projection.
Then
\[
\hat{r}(g \circ h)  = g \circ h \circ  r  = g \circ r \circ r \circ h \circ r =
\hat{r}g \circ \hat{r}h.
\]
We have adopted the convention that a unary function operator binds more strongly than function composition.
It follows that if $f = g \circ h$, then $\hat{r} f = f$ if and only if $g = \hat{r}g$ and $h = \hat{r}h$.
Suppose $h: \mathcal{A}^n \to \mathcal{A}^m$ is a projection and $H(h) = \{p_1, \ldots, p_m\}$.
Then
\begin{align*}
\hat{r}h(a_1\ldots a_n) & = r \circ h \circ r(a_1\ldots a_n) =
r \circ h(a_n \ldots a_1) \\
& = r(a_{n+1-{p_1}} \ldots a_{n+1-{p_m}})
= a_{n+1-{p_m}} \ldots a_{n+1-p_1}.
\end{align*}
We see that the index set of $\hat{r}h$ relates to the index set of $h$ by
$H(\hat{r}h) = n + 1 - H(h)$.
We need to know how many projections $h: \mathcal{A}^n \to \mathcal{A}^m$ are reflection-invariant.
The question can be rephrased by asking how many ways there are to arrange $m$ balls symmetrically into
$n$ slots, which each slot holding at most one ball.
\begin{prop}
\label{prop:inv-h}
Suppose there are $m$ balls and $n$ slots where $n \ge m$ and each slot can hold one ball.
Then there are $\alpha_{n,m}$ arrangements of balls in slots that are reflection-symmetrical,
where $\alpha_{n,m}$ is defined by
\[
\alpha_{n,m} = \left\{
\begin{array}{ll}
0 & \mbox{if $n$ is even and $m$ is odd}, \\
\binom{\lfloor \frac{n}{2} \rfloor }{\lfloor \frac{m}{2} \rfloor } & \mbox{otherwise;}
\end{array}
\right.
\]
and where $\lfloor.\rfloor : \mathbb{R} \to \mathbb{Z}$,
$\lfloor x \rfloor = \mathrm{max}\{ p \in \mathbb{Z} \, | \, p \le x\}$, is the floor function.
The number of arrangements that are not reflection-symmetrical is denoted by $\beta_{n,m}$ defined by
\[
\beta_{n,m} = \binom{n}{m} - \alpha_{n,m}.
\]
\end{prop}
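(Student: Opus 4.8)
The plan is to translate the ball-and-slot language into the combinatorics of the reflection involution on slot indices. Label the slots $1, \ldots, n$. An arrangement of balls is just a subset $S \subseteq \{1, \ldots, n\}$ with $|S| = m$, and reflection sends slot $p$ to slot $n+1-p$. Thus an arrangement is reflection-symmetrical precisely when $S$ is invariant under the involution $\rho : p \mapsto n+1-p$, i.e.\ $\rho(S) = S$. The task therefore reduces to counting $\rho$-invariant $m$-subsets of $\{1,\ldots,n\}$.

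First I would describe the orbit structure of $\rho$. Since $\rho^2 = \id$, every orbit has size one or two; an orbit $\{p\}$ is a fixed point iff $p = n+1-p$, i.e.\ $2p = n+1$. This has an integer solution iff $n$ is odd, in which case the unique fixed point is $p = (n+1)/2$ and the remaining $n-1$ slots split into $(n-1)/2$ two-element orbits. If $n$ is even there is no fixed point and all $n$ slots form $n/2$ two-element orbits. In both cases the number of two-element orbits equals $\lfloor n/2 \rfloor$.

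The key observation is that a $\rho$-invariant set is exactly a union of orbits, so choosing one amounts to deciding independently, for each orbit, whether to include it. Including a two-element orbit contributes $2$ to $|S|$, while including the fixed point (when it exists) contributes $1$. I would then split into cases on the parity of $n$ and $m$. If $n$ is even, every included orbit contributes an even amount, so $|S|$ is necessarily even; when $m$ is odd no invariant set of size $m$ exists, giving $\alpha_{n,m}=0$, and when $m$ is even one simply chooses $m/2$ of the $n/2 = \lfloor n/2 \rfloor$ pairs, yielding $\binom{\lfloor n/2 \rfloor}{m/2} = \binom{\lfloor n/2 \rfloor}{\lfloor m/2 \rfloor}$. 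If $n$ is odd, the parity of $|S|$ is controlled by whether the fixed point is used: to obtain $|S| = m$ one includes the fixed point precisely when $m$ is odd, and in either case selects $\lfloor m/2 \rfloor$ of the $\lfloor n/2 \rfloor$ pairs, again giving $\binom{\lfloor n/2 \rfloor}{\lfloor m/2 \rfloor}$.

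The only delicate point is the parity bookkeeping: one must verify that in every surviving case the number of selected pairs is exactly $\lfloor m/2 \rfloor$ and the number of available pairs is exactly $\lfloor n/2 \rfloor$, so that the four subcases collapse into a single binomial coefficient. This is a short check using $\lfloor m/2 \rfloor = m/2$ for even $m$ and $\lfloor m/2 \rfloor = (m-1)/2$ for odd $m$, together with the analogous identities for $n$. Finally, the formula $\beta_{n,m} = \binom{n}{m} - \alpha_{n,m}$ is immediate, since the total number of (not necessarily symmetric) arrangements of $m$ balls in $n$ slots with at most one ball per slot is $\binom{n}{m}$, and the non-symmetric arrangements are exactly the complement of the symmetric ones.
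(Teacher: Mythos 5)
Your proof is correct and takes essentially the same approach as the paper: your orbit decomposition of the reflection involution (two-element orbits plus a possible fixed point when $n$ is odd) is just a more formal dressing of the paper's left-half/right-half counting, with the identical four-way parity case analysis, the identical parity obstruction when $n$ is even and $m$ is odd, and the identical binomial coefficients. The final step for $\beta_{n,m}$ is the same complementary count in both.
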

\begin{proof}
We distinguish between four cases in which $m$ is even or odd and $n$ is even or odd.

If $m$ and $n$ are even, there are $\binom{n/2}{m/2}$ ways to
arrange $m/2$ balls on the left half of the slots.
The arrangement of the balls on the right half of the slots follows from the reflection symmetry.

If $m$ is even and $n$ is odd, the slot in the center must remain empty,
so there are $\binom{(n-1)/2}{m/2}$ possibilities.

If $m$ is odd and $n$ is odd, and if $m \ge 1$, one ball must go into the slot in the middle.
Therefore, there are $\binom{(n-1)/2}{(m-1)/2}$ possibilities.

If $m$ is odd and $n$ is even, the number of balls in the left half of the slots cannot be equal
to the number of balls in the right half of the slots.
Therefore, there is no arrangement.

This proves the statement about $\alpha_{n,m}$.
Since $\alpha_{n,m} + \beta_{n,m} = \binom{n}{m}$ the statement about $\beta_{n,m}$ follows.
\end{proof}
Note that $\alpha_{n,n} = 1$ and $\beta_{n,n} = 0$.
We will now determine how the numbers $|t^U_n|$ can be expressed as a combination of
numbers $|\overline{t}^{V}_m|$.
We do this by analysing how an equivalence class $[g] \in \overline{t}^{V}_m$, $0 \le m \le n$, contributes
to $t^U_n$.
We structure this approach by iterating the possible types of equivalence classes, i.e., the subgroups of $\mathcal{S}_2$.
Let $0 \le m \le n$.

\begin{enumerate}

\item Type $\langle 1 \rangle$.
Let $[g] \in \overline{t}^{\langle 1 \rangle}_m$.
The equivalence class $[g]$ consists of the distinct local rules
$ \{ g, \hat{c}g, \hat{r}g, \widehat{rc}g \}$.
We form the compositions
of the local rules in $[g]$ and a projection $h: \mathcal{A}^n \to \mathcal{A}^m$
as well as the compositions of the local rules in $[g]$ and the projection $\hat{r}h$.
Put $A = [g] \circ h = \{g \circ h, \hat{c}g \circ h, \hat{r}g \circ h, \widehat{rc}g \circ h \}$,
and $B = [g] \circ \hat{r}h = \{g \circ \hat{r}h, \hat{c}g \circ \hat{r}h,
\hat{r}g \circ \hat{r}h, \widehat{rc}g \circ \hat{r}h \}$.
If $h = \hat{r}h$, then $A = B$ and
$A = [g \circ h, \hat{c}(g \circ h), \hat{r}(g \circ h), \widehat{rc}(g \circ h)]$.
Thus $A$ is an equivalence class of $t^{\langle 1 \rangle}_n$.
Suppose now that $h \ne \hat{r}h$.
The sets $A$ and $B$ are not equivalence classes of $t^{\langle 1 \rangle}_n$, but
$A \cup B$ can be partitioned into two equivalence classes:
$[g \circ h] = \{g \circ h, \hat{c}g \circ h,
\hat{r} g \circ \hat{r} h, \widehat{rc} g \circ \hat{r} h \}$,
and
$[g \circ \hat{r}h] = \{g \circ \hat{r}h, \hat{c}g \circ \hat{r}h,
\hat{r} g \circ h, \widehat{rc} g \circ h \}$.
So, either $[g]$ and $h = \hat{r}h$ contribute one equivalence class to $t^{\langle 1 \rangle}_n$,
or $[g]$, $h$, and $\hat{r}h$, ($h \ne \hat{r}h$), contribute two
equivalence classes to $t^{\langle 1 \rangle}_n$.
Either way, we conclude that each pair $[g], h$ induces one equivalence class in $t^{\langle 1 \rangle}_n$.
Thus $\overline{t}^{\langle 1 \rangle}_m$
contributes  $\binom{n}{m} |\overline{t}^{\langle 1 \rangle}_m|$ equivalence classes to $t^{\langle 1 \rangle}_n$.

\item Type $\langle c \rangle$.
Let $[g] \in \overline{t}^{\langle c \rangle}_m$.
The equivalence class $[g]$ consists of the distinct local rules
$ \{ g, \hat{r}g \}$.
Put $A = [g] \circ h = \{g \circ h, \hat{r}g \circ h \}$, and
$B = [g] \circ \hat{r}h = \{g \circ \hat{r}h, \hat{r}g \circ \hat{r}h \}$.
If $h = \hat{r}h$, then $A = B$ and $A$ is an equivalence class of $t^{\langle c \rangle}_n$.
If $h \ne \hat{r}h$,
then $A \cup B$ can be decomposed into the two equivalence classes
$[g \circ h] = \{g \circ h, \hat{r}g \circ \hat{r}h\}$
and
$[g \circ \hat{r}h] = \{g \circ \hat{r}h, \hat{r}g \circ h\}$.
Similar to the case above, $\overline{t}^{\langle c \rangle}_m$
contributes  $\binom{n}{m} |\overline{t}^{\langle c \rangle}_m|$ equivalence classes to $t^{\langle c \rangle}_n$.

\item Type $\langle r \rangle$.
Let $[g] \in \overline{t}^{\langle r \rangle}_m$.
The equivalence class $[g]$ consists of the distinct local rules
$ \{ g, \hat{c}g \}$.
Put $A = [g] \circ h = \{g \circ h, \hat{c}g \circ h \}$, and
$B = [g] \circ \hat{r}h = \{g \circ \hat{r}h, \hat{c}g \circ \hat{r}h \}$.
If $h = \hat{r}h$, then $A = B$, and $A \in t^{\langle r \rangle}_n$.
If $h \ne \hat{r}h$, then $A \cup B$ form one equivalence class of $t^{\langle 1\rangle}_n$.
Thus $|\overline{t}^{\langle r \rangle}_m|$
contributes $\alpha_{n,m} |\overline{t}^{\langle r \rangle}_m|$ equivalence classes to $t^{\langle r \rangle}_n$
and $\beta_{n,m}|\overline{t}^{\langle r \rangle}_m| / 2$ equivalence classes to $t^{\langle 1\rangle}_n$.

\item Type $\langle rc \rangle$.
Let $[g] \in \overline{t}^{\langle rc \rangle}_m$.
The equivalence class $[g]$ consists of the distinct local rules
$ \{ g, \hat{r}g \}$.
Note that $\hat{c}g = \hat{r}g$.
Put $A = [g] \circ h = \{g \circ h, \hat{r}g \circ h \}$, and
$B = [g] \circ \hat{r}h = \{g \circ \hat{r}h, \hat{r}g \circ \hat{r}h \}$.
If $h = \hat{r}h$, then $A = B$, and $A \in t^{\langle rc \rangle}_n$.
If $h \ne \hat{r}h$, then $A \cup B$ form one equivalence class of $t^{\langle 1\rangle}_n$.
Thus $\overline{t}^{\langle rc \rangle}_m$
contributes $\alpha_{n,m} |\overline{t}^{\langle rc \rangle}_m| $ equivalence classes to $t^{\langle rc \rangle}_n$
and $\beta_{n,m} |\overline{t}^{\langle rc \rangle}_m| / 2$ equivalence classes to $t^{\langle 1\rangle}_n$.

\item Type $\mathcal{S}_2$.
Let $[g] \in \overline{t}^{\mathcal{S}_2}_m$.
Then $[g] = \{ g \}$.
Put $A = [g] \circ h = \{g \circ h \}$, and
$B = [g] \circ \hat{r}h = \{g \circ \hat{r}h  \}$.
If $h = \hat{r}h$, then $A = B$, and $A \in t^{\mathcal{S}_2}_n$.
If $h \ne \hat{r}h$, then $A \cup B$ form one equivalence class of $t^{\langle c \rangle}_n$.
Thus $\overline{t}^{\mathcal{S}_2}_m$
contributes $\alpha_{n,m} |\overline{t}^{\mathcal{S}_2}_m|$ equivalence classes to $t^{\mathcal{S}_2}_n$
and $\beta_{n,m}  |\overline{t}^{\mathcal{S}_2}_m| / 2$ equivalence classes to $t^{\langle c \rangle}_n$.
\end{enumerate}
Collecting all contributions to $t^{\langle 1 \rangle}_n$, we obtain
\[
|t_n^{\langle 1 \rangle}| =
         \sum_{m=0}^{n} \left[ \binom{n}{m} |\overline{t}_m^{\langle 1 \rangle}|
        + \frac{1}{2} \beta_{n,m}  \left( |\overline{t}_m^{\langle r \rangle}|
        + |\overline{t}_m^{\langle rc \rangle}| \right) \right].
\]
Since our goal is to calculate $|\overline{t}^{\langle 1 \rangle}_n|$ we
rewrite the equation using $\alpha_{n,n} = 1$ and $\beta_{n,n} = 0$:
\[
|\overline{t}_n^{\langle 1 \rangle}| =  |t_n^{\langle 1 \rangle}|
        - \sum_{m=0}^{n-1} \left[ \binom{n}{m} |\overline{t}_m^{\langle 1 \rangle}|
        + \frac{1}{2} \beta_{n,m}  \left( |\overline{t}_m^{\langle r \rangle}|
        + |\overline{t}_m^{\langle rc \rangle}| \right) \right].
\]
After considering the other cases, and rewriting the corresponding equations, we obtain the following proposition.
\begin{prop}
The cardinalities of the equivalence classes of $\overline{L}(n,2)$ by type are given by
\begin{equation*}
  \label{eq:t}
  \begin{aligned}
|\overline{t}_n^{\mathcal{S}_2}| &= |t_n^{\mathcal{S}_2}| - \sum_{m=0}^{n-1} \alpha_{n,m} |\overline{t}_m^{\mathcal{S}_2}|; \\
|\overline{t}_n^{\langle r \rangle}| &= |t_n^{\langle r \rangle}|
        - \sum_{m=0}^{n-1} \alpha_{n,m} |\overline{t}_m^{\langle r \rangle}|; \\
|\overline{t}_n^{\langle rc \rangle}| &= |t_n^{\langle rc \rangle}|
        - \sum_{m=0}^{n-1} \alpha_{n,m} |\overline{t}_m^{\langle rc \rangle}|; \\
|\overline{t}_n^{\langle c \rangle}| &= |t_n^{\langle c \rangle}|
        - \sum_{m=0}^{n-1} \left[ \binom{n}{m} |\overline{t}_m^{\langle c \rangle}|
        + \frac{1}{2} \beta_{n,m} |\overline{t}_m^{\mathcal{S}_2}| \right]; \\
|\overline{t}_n^{\langle 1 \rangle}| &=  |t_n^{\langle 1 \rangle}|
        - \sum_{m=0}^{n-1} \left[ \binom{n}{m} |\overline{t}_m^{\langle 1 \rangle}|
        + \frac{1}{2} \beta_{n,m}  \left( |\overline{t}_m^{\langle r \rangle}|
        + |\overline{t}_m^{\langle rc \rangle}| \right) \right].
  \end{aligned}
\end{equation*}
\end{prop}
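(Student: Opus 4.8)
The plan is to mirror, for every target type, the argument already carried out for $t_n^{\langle 1\rangle}$, and then to invert the resulting triangular linear system. The engine is the unique decomposition $f = g \circ h$ of Proposition~\ref{prop:uniqueness}: it lets me replace each orbit $[f] \in L_2(n)/\mathcal{S}_2$ by a pair consisting of an irreducible orbit $[g] \in \overline{L}_2(m)/\mathcal{S}_2$ and a projection $h \in H_2(n,m)$, on which $\mathcal{S}_2$ acts through $\hat c(g\circ h) = \hat c g \circ h$ and $\hat r(g\circ h) = \hat r g \circ \hat r h$. The stabilizer of $g\circ h$ is then exactly the set of symmetries of $g$ compatible with $h$: complementation $c$ survives automatically, while any symmetry involving $r$ survives if and only if $h = \hat r h$. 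This single observation explains every \emph{demotion} appearing in the five cases: a class of type $\langle r\rangle$, $\langle rc\rangle$, or $\mathcal{S}_2$ keeps its reflective symmetry when $h$ is reflection-invariant (there are $\alpha_{n,m}$ such $h$ by Proposition~\ref{prop:inv-h}) and loses it otherwise, the remaining $\beta_{n,m}$ projections pairing up as $h, \hat r h$ and hence contributing half as many classes of the lower type.

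Second, I would collect these contributions type by type. Reading off the five cases gives the forward system
\begin{align*}
|t_n^{\mathcal{S}_2}| &= \sum_{m=0}^{n}\alpha_{n,m}\,|\overline{t}_m^{\mathcal{S}_2}|,\qquad
|t_n^{\langle r\rangle}| = \sum_{m=0}^{n}\alpha_{n,m}\,|\overline{t}_m^{\langle r\rangle}|,\qquad
|t_n^{\langle rc\rangle}| = \sum_{m=0}^{n}\alpha_{n,m}\,|\overline{t}_m^{\langle rc\rangle}|,\\
|t_n^{\langle c\rangle}| &= \sum_{m=0}^{n}\Bigl[\tbinom{n}{m}\,|\overline{t}_m^{\langle c\rangle}| + \tfrac12\beta_{n,m}\,|\overline{t}_m^{\mathcal{S}_2}|\Bigr],\\
|t_n^{\langle 1\rangle}| &= \sum_{m=0}^{n}\Bigl[\tbinom{n}{m}\,|\overline{t}_m^{\langle 1\rangle}| + \tfrac12\beta_{n,m}\bigl(|\overline{t}_m^{\langle r\rangle}| + |\overline{t}_m^{\langle rc\rangle}|\bigr)\Bigr].
\end{align*}
Each equation expresses $|t_n^U|$ as a nonnegative integer combination of the $|\overline{t}_m^V|$ with $m \le n$, so the system is lower-triangular in the arity $m$.

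Third comes the inversion, which is where the normalisations $\alpha_{n,n} = 1$ and $\beta_{n,n} = 0$ do the work. Isolating the $m = n$ summand in each equation, the diagonal coefficient is $\alpha_{n,n} = 1$ (or $\tbinom{n}{n} = 1$), so that term is precisely $|\overline{t}_n^U|$, while every $\beta_{n,n}$ cross-term vanishes. Solving for $|\overline{t}_n^U|$ and leaving the sums over $m \le n-1$ on the right then yields the five asserted formulas verbatim.

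The routine but delicate part is justifying the coefficients in the forward system, and this is essentially the case analysis already completed. The one genuinely fiddly point is type $\langle 1\rangle$: there the orbit of $g$ has four elements, so when $h \ne \hat r h$ one must check that the eight rules in $[g]\circ h \cup [g]\circ \hat r h$ split into exactly two orbits of size four, rather than one orbit of size eight or four orbits of size two. This is what produces the factor $\tfrac12\beta_{n,m}$ feeding into $t_n^{\langle 1\rangle}$ from the $\langle r\rangle$ and $\langle rc\rangle$ sources. Once the forward coefficients are trusted, the triangular inversion is immediate and carries no further obstacle.
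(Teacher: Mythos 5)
Your proposal is correct and follows essentially the same route as the paper: the paper's proof consists precisely of the five-case analysis of how an irreducible orbit $[g]$ of each type composes with projections $h$ (invariant or not under $\hat r$), yielding your forward triangular system, followed by the inversion using $\alpha_{n,n}=1$, $\beta_{n,n}=0$, and $\tbinom{n}{n}=1$. One small citation slip: the unique factorization $f = g \circ h$ is the unnumbered statement in Section~\ref{sec:irr}, not Proposition~\ref{prop:uniqueness}, which concerns uniqueness of irreducible CA representations of a global map.
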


\begin{table}
\begin{center}
\caption{Number of equivalence classes of irreducible local rules.}
\label{tab:s2_fs}
\begin{tabular}{|l|r|r|r|r|r|r|}
\hline
       & \multicolumn{6}{|c|}{$|\overline{t}^H_n|$} \\ \hline
$H$    & $n=0$ & $n=1$ & $n=2$ & $n=3$ & $n=4$ & $n=5$ \\ \hline
$\langle S_2 \rangle$                   & 0 & 2 & 0 &  6 &       0 & 1\,010 \\ \hline
$\langle rc \rangle$                            & 0 & 0 & 0 &  4 &       0 & 32\,248 \\ \hline
$\langle c \rangle$                             & 0 & 0 & 0 &  2 &     104 & 31\,668 \\ \hline
$\langle r \rangle$                             & 1 & 0 & 3 & 24 &     505 & 523\,216 \\ \hline
$\langle 1 \rangle$                             & 0 & 0 & 1 & 38 & 15\,844 & 1\,073\,336\,680 \\ \hline
$|\overline{L}_2(n)/\mathcal{S}_2|$     & 1 & 2 & 4 & 74 & 16\,453 & 1\,073\,954\,842 \\ \hline
\end{tabular}
\end{center}
\end{table}

The numbers $|\overline{t}^H_n|$ can now iteratively calculated, starting by $n=0$.
Table~\ref{tab:s2_fs} depicts the results for a neighbourhood size $n=0, \ldots, 5$.
The sets $\overline{t}^U_n$ form a partition of $\overline{L}_2(n) / \mathcal{S}_2$, so
\[
|\overline{L}_2(n) / \mathcal{S}_2| = \sum_{U \le \mathcal{S}_2} |\overline{t}^U_n|.
\]

\subsection{Small Neighbourhoods}

We still limit ourselves to binary CAs.
There are only seven equivalence classes for a neighbourhood size less than or equal to two, which we
list below.

\begin{enumerate}

\item Cardinality Zero, $\overline{L}_2(0)$.
These are the constant functions $W^0_0 = 0$ and $W^0_1 = \hat{c} W^0_0 = 1$.

\item Cardinality One, $\overline{L}_2(1)$.
One function is the identity $W^1_2 = \id$, $W^1_2(a_0) = a_0$ for $a_0 \in \{0,1\}$.
The second function inverts the $\id$ function, $W^1_1(a_0) = ca_0$, $W^1_1 = c \id$.
Note that $\hat{c} \id = c \circ \id \circ c = \id \ne c \circ \id$.

\item Cardinality Two, $\overline{L}_2(2)$.

Equivalence Class $\{W^2_1, W^2_7\}$.
The two functions are $W^2_1$, given by
\[
W^2_1(a_0a_1) =
\left\{
\begin{array}{ll}
1 & \mbox{if $a_0a_1 = 00$;} \\
0 & \mbox{otherwise,}
\end{array}
\right.
\]
and $W^2_{7} = \hat{c}W^2_1$.

Equivalence Class $\{W^2_6, W^2_9\}$.
The two functions are
$W^2_6$, given by
\[
W^2_6(a_0a_1) =
\left\{
\begin{array}{ll}
1 & \mbox{if $a_0a_1 = 01$ or $a_0a_1 = 10$;} \\
0 & \mbox{otherwise,}
\end{array}
\right.
\]
and $W^2_9 = \hat{c}W^2_6$.

Equivalence Class $\{W^2_8, W^2_{14}\}$.
The two functions are
$W^2_8$, given by
\[
W^2_8(a_0a_1) =
\left\{
\begin{array}{ll}
1 & \mbox{if $a_0a_1 = 11$;} \\
0 & \mbox{otherwise,}
\end{array}
\right.
\]
and $W^2_{14} = \hat{c}W^2_8$.

Equivalence Class
$\{W^2_2,W^2_4,W^2_{11},W^2_{13}\}$.
The first function is given by
\[
W^2_2(a_0a_1) =
\left\{
\begin{array}{ll}
1 & \mbox{if $a_0a_1 = 01$;} \\
0 & \mbox{otherwise,}
\end{array}
\right.
\]
The other ones are $W^2_4 = \hat{r}W^2_2$, $W^2_{11} = \hat{c}W^2_2$, and
$W^2_{13} = \widehat{rc} W^2_2$.

\end{enumerate}

\section{Global Maps}
\label{sec:glob}

\subsection{Symmetry Transformations}
We now examine the symmetry transformations on global maps and link two of them to the
symmetry transformations on local rules, see  Subsection \ref{sec:loc-symm}.

First we consider the permutation operator.
Let $S_\mathcal{A}$ be the symmetric group of degree $|\mathcal{A}|$
and suppose $\nu \in S_\mathcal{A}$.
The extension of $\nu$ to configurations is defined by elementwise application.
If $x$ is a configuration, set $(\nu x)_i = \nu (x_i)$.
If $\Phi^N_f$ is the induced global mapping of $f$, we define
$\hat{\nu} \Phi^N_f$ by
$\hat{\nu} \Phi^N_f (x) = \nu \Phi^N_f (\nu^{-1} x)$ for all configurations $x$.
From
\begin{eqnarray*}
\left( \hat{\nu} \Phi^N_f(x) \right)_{i}
& = & \left( \nu \Phi^N_f (\nu^{-1} x) \right)_{i}
= \nu f \left(\nu^{-1}(x_{i+j_1} \ldots x_{i+j_{n}}) \right) \\
& = & \hat{\nu} f(x_{i+j_1} \ldots x_{i+j_{n}})
 =  \Phi^N_{\hat{\nu} f} (x)_{i}
\end{eqnarray*}
follows $\hat{\nu} \Phi^N_f =  \Phi^N_{\hat{\nu} f}$.
Let $A = (\mathcal{A}, N, f)$ be a CA and put $B = (\mathcal{A}, N, \hat{\nu} f)$.
By defining $\hat{\nu} A = B$, the permutation operator $\hat{\nu}$ maps the CA $A$ to the CA $B$.

The two other transformations, reflection and shift, originate from isometric transformations on the set of integers.
Isometric transformations, that is transformations that preserve the distance, form a group,
the infinite dihedral group $D_\infty$, see e.g. \cite{armstrong1997groups}.
By definition if $\varphi \in D_\infty$ and $i,j \in \mathbb{Z}$,
then $|\varphi(i) - \varphi(j)| = |i - j|$.
The group $D_\infty$ has the representation $\langle r, t \, | \, r^2=1, rtr = t^{-1} \rangle$,
where $r$ and $t$ can be chosen to be $r(i) = -i$ and $t(i) = i+1$ for all $i \in \mathbb{Z}$.

We extend these operations to sets of integers.
If $N$ is any set of integers, put $-N = \{-i \, | \, i \in N \}$ and
$j + N = \{ j + i \, | \, i \in N \}$ if $j$ is itself an integer.
Two sets of integers $M$ and $N$ are called congruent, in symbols $M \cong N$,  if and only if
there is an integer $j$ such that $N = j + M $ or $N = j - M$.
The congruence relation is an equivalence relation.
A set $N$ of integers is called reflection-symmetrical, if
there is an integer $j$ such that $N = j - N$.
Let $\Sym$ denote the set of all reflection-symmetrical subsets of integers.
Then $N$ is reflection-symmetrical if and only if $N \in \Sym$.
If $N \in \Sym$, then $[N] \subset \Sym$.

The shift operator $\sigma$ plays a dual role.
First, by the Curtis-Hedlund-Lyndon Theorem the shift operator characterizes the continuous functions on $\mathcal{A}^\mathbb{Z}$
that can be expressed as the global maps of a CA.
Second it can be used to map a CA to another one.
It is this second meaning that we now study.
Let $A=(\mathcal{A},N,f)$ be a CA.
Then
\[
(\sigma \Phi^N_f(x))_i = (\Phi^N_f(x))_{i+1}
= f(x_{i+j_1+1} \ldots x_{i+j_{n}+1}) = ( \Phi^{1+N}_f(x))_i.
\]
Hence we also can see $\sigma$ as the operator that maps the CA $A = (\mathcal{A},N,f)$ to
the CA $B = \sigma A = (\mathcal{A}, 1+N,f)$,
expressed by
\[
\sigma \Phi^A = \Phi^{\sigma A} \ \ \ \mbox{or} \ \ \ \sigma \Phi^N_f = \Phi^{1+N}_f.
\]

The reflection operator is based on the reflections of configurations.
If $x$ is a configuration, set $(rx)_i = x_{-i}$.
The reflection operator $\hat{r}$ is formally defined by  $\hat{r} \Phi_f(x) = r \Phi_f(rx)$.
Let $A = (\mathcal{A},N,f)$ be a CA with the global mapping $\Phi^N_f$.
Then
\begin{align*}
(\hat{r} \Phi^N_f(x))_i & =  (r\Phi^N_f(rx))_i
        =  (\Phi^N_f(rx))_{-i} = f((rx)_{-i+j_1} \ldots (rx)_{-i+j_n}) \\
& =  f(x_{i-j_1} \ldots x_{i-j_n}) = f(r(x_{i-j_n} \ldots x_{i-j_1})) \\
& =  \hat{r} f(x_{i-j_n} \ldots x_{i-j_1}) = (\Phi^{-N}_{\hat{r}f}(x))_i.
\end{align*}
If we define $\hat{r}A = (\mathcal{A}, -N, \hat{r}f)$, we can express the relation above as
$\hat{r} \Phi^A = \Phi^{\hat{r} A}$.
Let $x$ be a configuration.
From
\[
(\sigma r (x))_i = (r(x))_{i+1} = x_{-i-1} = (\sigma^{-1}(x))_{-i} = (r \sigma^{-1}(x))_i
\]
follows the commutator relation $\sigma r = \sigma^{-1}r$.
Then
\[
\sigma \hat{r} \Phi^N_f = \sigma r \Phi^N_f r = r \sigma^{-1} \Phi^N_f r = \hat{r} \sigma^{-1} \Phi^N_f
\]
also implies $\sigma \hat{r} = \hat{r} \sigma^{-1}$.

In general, the reflection operator is not a transformation on the global rule space $G_\mathcal{A}(N)$ or
$\overline{G}_\mathcal{A}(N)$,
but a map $\hat{r}: G_\mathcal{A}(N) \to G_\mathcal{A}(-N)$ or $\hat{r}: \overline{G}_\mathcal{A}(N) \to
\overline{G}_\mathcal{A}(-N)$.
If $N = -N$, the map becomes a transformation.

The operators $\hat{v}$, $\sigma$, and $\hat{r}$ that act on a global map are defined in terms of a local rule.
Since different local rules can induce the same global map, we have to show that
the operators are well-defined.

\begin{prop}
Let $A = (\mathcal{A}, N_1, f_1)$ and $B = (\mathcal{A}, N_2, f_2)$ be two CAs.
If $\Phi^A = \Phi^B$, then
$\sigma \Phi^A = \sigma \Phi^B$, $\hat{\nu} \Phi^A = \hat{\nu} \Phi^B$ if $\nu \in S_\mathcal{A}$,
and $\hat{r} \Phi^A = \hat{r} \Phi^B$.
\end{prop}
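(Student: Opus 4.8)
The plan is to observe that, although each of the three operators was introduced through a particular local-rule representation, the computations carried out earlier in this subsection in fact express each transformed global map entirely in terms of the configuration map $\Phi^A$, with no residual reference to the pair $(N,f)$. Once this is made explicit, the proposition is immediate: two representations of the same global map are literally the same function $\mathcal{A}^\mathbb{Z}\to\mathcal{A}^\mathbb{Z}$, so any operation that reads off only the values of this function must yield the same result for both. The whole content of the statement is therefore the \emph{well-definedness} of $\sigma$, $\hat{\nu}$, and $\hat{r}$ as operators on the global rule space, and the strategy is to reduce each to an intrinsic formula in $\Phi^A$.

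Concretely, for the shift I would use the identity $(\sigma \Phi^N_f(x))_i = (\Phi^N_f(x))_{i+1}$ established above, which says that $\sigma \Phi^A = \sigma \circ \Phi^A$ is simply the composition of the shift map $\sigma$ with the global map $\Phi^A$. Since $\Phi^A = \Phi^B$ are the same function, composing on the left with $\sigma$ gives $\sigma \Phi^A = \sigma \Phi^B$. Equivalently, one may invoke the purely global identity $\Phi^{1+N}_f = \sigma \circ \Phi^N_f$, so that $\Phi^{1+N_1}_{f_1} = \sigma \circ \Phi^{N_1}_{f_1} = \sigma \circ \Phi^{N_2}_{f_2} = \Phi^{1+N_2}_{f_2}$, which is exactly the definition of $\sigma \Phi^A = \sigma \Phi^B$.

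For the permutation operator I would use $\hat{\nu} \Phi^N_f(x) = \nu \Phi^N_f(\nu^{-1} x)$, which exhibits $\hat{\nu} \Phi^A$ as the conjugate of $\Phi^A$ by the configuration-level map $x \mapsto \nu x$; this formula refers only to $\Phi^A$ evaluated at the configuration $\nu^{-1} x$. Likewise, for the reflection operator the identity $\hat{r} \Phi^N_f(x) = r \Phi^N_f(rx)$ exhibits $\hat{r} \Phi^A$ as the conjugate of $\Phi^A$ by the self-inverse reflection $x \mapsto rx$. In both cases the defining right-hand side is a function of $\Phi^A$ alone, so $\Phi^A = \Phi^B$ forces $\hat{\nu} \Phi^A = \hat{\nu} \Phi^B$ and $\hat{r} \Phi^A = \hat{r} \Phi^B$.

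There is no genuine obstacle here; the only point requiring care is the conceptual one of confirming that the intermediate equalities in each of the three derivations do not covertly reintroduce the chosen local rule. This amounts to checking that the final expression in each computation is built solely from $\sigma$, $\nu$, $r$ (operations on configurations) together with the symbol $\Phi^A$, rather than from $f$ or $N$ directly---which is precisely what those displayed computations demonstrate. Having verified this for all three operators, the three asserted equalities follow at once, establishing that $\sigma$, $\hat{\nu}$, and $\hat{r}$ are well defined as operators on the global rule space.
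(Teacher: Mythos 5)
Your proof is correct, but it takes a genuinely different route from the paper's. You exploit the fact that each operator is \emph{defined intrinsically} on the function $\Phi^A$ itself: $\sigma\Phi = \sigma\circ\Phi$, $\hat{\nu}\Phi = \nu\circ\Phi\circ\nu^{-1}$, and $\hat{r}\Phi = r\circ\Phi\circ r$, so that equality of $\Phi^A$ and $\Phi^B$ as functions forces equality of the transformed maps with no further work; your only obligation (which you discharge) is to check that the displayed formulas really involve no residual reference to $(N,f)$. The paper instead treats the operators through their action on representations --- $\hat{\nu}\Phi^N_f = \Phi^N_{\hat{\nu}f}$, $\sigma\Phi^N_f = \Phi^{1+N}_f$, $\hat{r}\Phi^N_f = \Phi^{-N}_{\hat{r}f}$ --- and proves well-definedness by reduction to the canonical irreducible representation: writing $f = g\circ h$ with $g$ irreducible and $h$ a projection, it shows $\Phi^N_f = \Phi^{N|H(h)}_g$, invokes the uniqueness of this representation (Proposition~\ref{prop:uniqueness}), and then verifies for each operator that $(N,f)$ and $(N|H(h),g)$ are mapped to the same global map, using identities such as $\hat{\nu}f = \hat{\nu}g\circ h$ and $(-N)|H(\hat{r}h) = -(N|H(h))$. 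Your argument is shorter and conceptually cleaner; the paper's longer computation additionally produces the explicit bookkeeping of how neighbourhoods, projections, and irreducible rules transform under each operator, which is reused later (e.g., in Proposition~\ref{prop:global-eq} and the finite-interval counting), and it is the natural argument if one regards the representation-level formulas, rather than the conjugation formulas, as the working definition of the operators. Note also that your supplementary remark for the shift --- that $\Phi^{1+N}_f = \sigma\circ\Phi^N_f$ holds for \emph{every} representation --- is exactly what protects your approach from circularity under that alternative reading, and the analogous remark applies to $\hat{\nu}$ and $\hat{r}$.
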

\begin{proof}
Let $f$ be a reducible local rule that can be written as $f = g \circ h$, where $g$ is an irreducible
local rule and $h$ is a projection.

If $N = \{j_1, \ldots, j_n\}$ is a set of integers
such that $j_1 < \ldots < j_n$ and
and $H = \{p_1, \ldots, p_m\}$ is a subset of the integer interval $[1,n]$
such that $p_1 < \ldots < p_m$,
we define the selection operator $|$ by
$N | H = \{j_{p_1}, \ldots, j_{p_m}\}$.
Then
\begin{align*}
\Phi^N_f(x)_i & = (g \circ h(x))_i = g \circ h(x_{i+j_1} \ldots x_{i+j_n})
= g(x_{i+j_{p_1}} \ldots x_{i+j_{p_m}}) = \Phi^{N | H(h)}_g.
\end{align*}
The representation $\Phi^N_f = \Phi^{N | H(h)}_g$ of $\Phi^N_f$ by an irreducible local rule $g$
and the neighbourhood $N | H(h)$ is unique.
If $\Phi^N_f = \Phi^M_{g^\prime}$ and $g^\prime$ is irreducible,
it follows from Proposition~\ref{prop:uniqueness} that $g=g^\prime$ and
$M = N | H(h)$.
Hence it is sufficient to show that $\Phi^N_f$ and $\Phi^{N | H(h)}_g$ have the same image under
the operators $\hat{\nu}$, $\sigma$, and $\hat{r}$.
From $\hat{\nu}f = \hat{\nu}g \circ h$ follows
$\hat{\nu} \Phi^N_f = \hat{\nu} \Phi^{N | H(h)}_g$.
Also $\sigma \Phi^N_f = \Phi^{1+N}_f = \Phi^{(1 + N )| H(h))}_g =
\Phi^{1 + (N | H(h))}_g = \sigma \Phi^{N | H(h)}_g$
is easily verified.
The relation
\begin{align*}
(-N) | H(\hat{r}h) & = \{-j_n, \ldots, -j_1 \} | (n+1-H(h)) \\
& = \{-j_n, \ldots, -j_1 \} | \{ n+1-j_{p_m}, \ldots, n+1-j_{p_1} \} \\
& = \{ -j_{p_m}, \ldots, -j_{p_{1}} \} = -(N | H(h)),
\end{align*}
implies $(-N) | H(\hat{r}h) = -(N | H(h))$.
Thus
\begin{align*}
\hat{r}\Phi^N_f = \Phi^{-N}_{\hat{r}f} = \Phi^{-N}_{\hat{r} g \circ \hat{r} h}
= \Phi^{(-N)|H(\hat{r}h)}_{\hat{r}g} = \Phi^{-(N|H(h))}_{\hat{r}g} =
\hat{r} \Phi^{N|H(h)}_g.
\end{align*}
\end{proof}
The products of all symmetry transformations form the infinite group
\[
\mathcal{T}_\mathcal{A} = \{ \sigma^i \hat{r}^j \hat{\nu} \, | \, i \in \mathbb{Z}; j=0,1; \nu \in S_\mathcal{A} \}.
\]
It acts as group on $\mathcal{C}_\mathcal{A}$, the set of all global maps induced by CAs over the
alphabet $\mathcal{A}$.
The group action partitions $\mathcal{C}_\mathcal{A}$ into the
equivalence classes $\mathcal{C}_\mathcal{A} / \mathcal{T}_\mathcal{A}$.
Suppose the global maps $\Phi^A$ and $\Phi^B$ are in $\mathcal{C}_\mathcal{A}$.
We write $\Phi^A \cong \Phi^B$ if and only if there is an equivalence class
$C \in  \mathcal{C}_\mathcal{A} / \mathcal{T}_\mathcal{A}$ such that
both $\Phi^A$ and $\Phi^B$ are members of $C$.
Being an equivalence relation, the restriction of the relation $\cong$ to
any subset $\mathcal{D}$ of $\mathcal{C}_\mathcal{A}$ is still an equivalence relation.
We then write $\mathcal{D} / {\cong}$ for the resulting equivalence classes.
Before we study the general case, we examine the equivalence classes of global functions induced by
a neighbourhood of size $n=0,1,2$.

\begin{enumerate}

\item Let $n=0$. Then $N=\emptyset$.
The only local maps with an empty domain are the constant functions.
If $f$ is a constant local map, then the equivalence class of $\Phi_f$ is the set
of all constant local maps on $\mathcal{A}^\mathbb{Z}$.
Thus $[\Phi^\emptyset_f] = \{ \Phi^\emptyset_g \, | \, g() \in \mathcal{A} \}$.
There is one equivalence class and the size of this class is $k$.

\item Let $n=1$, $N=[0]$, and $f \in \overline{L}_\mathcal{A}(1)$.
Then $[\Phi^{[0]}_f] = \{ \Phi^{[i]}_g \, | \, i \in \mathbb{Z}, g \in [f] \}$.
There are $|L_\mathcal{A}(1) / \mathcal{S}|$ equivalence classes of the form $[\Phi^{[0]}_f]$,
and each equivalence class is countably infinite.

\item Let $n=2$, $N=\{0,p\}$, and $f \in \overline{L}_\mathcal{A}(2)$.
From
$\sigma^{2p}\hat{r}\Phi^{\{0,p\}}_f = \sigma^{2p}\Phi^{\{-p,0\}}_{\hat{r}f}
= \Phi^{\{0,p\}}_{\hat{r}f}$
follows $\Phi^{\{0,p\}}_{\hat{r}f} \in [\Phi^{\{0,p\}}_f]$.
We conclude that $[\Phi^{\{0,p\}}_f] = \{ \Phi^{\{i,i+p\}}_g \, | \, i \in \mathbb{Z}, g \in [f] \}$.
In this case, there are countably infinitely many equivalence classes, and the size of each
equivalence class is countably infinite.

\end{enumerate}

We consider now the general case.
Let $N$ be a finite set of integers and $f \in \overline{L}_\mathcal{A}(|N|)$.

\begin{enumerate}

\item Suppose $N \in \Sym$.
Then $N = p -N$ for a integer $p$, and consequently $\Phi^{N}_{\hat{r}f} \in [\Phi^{N}_f]$.
Thus $[\Phi^N_f] = \{\Phi^{d+N}_g \, | \, d \in \mathbb{Z}, g \in [f] \}$.
Since $N \in \Sym$, the equivalence class can also be written as
$[\Phi^N_f] = \{\Phi^{M}_g \, | \, M \cong N, g \in [f] \}$.

\item Suppose $N \not\in \Sym$ and suppose there is a $\nu \in S_\mathcal{A}$ such
that $\widehat{r\nu}f = f$.
Then $\widehat{r\nu} \Phi^N_f = \Phi^{-N}_f$.
This implies as above that
$[\Phi^N_f] = \{\Phi^{M}_g \, | \, M \cong N, g \in [f] \}$.

\item Suppose $N \not\in \Sym$ and suppose there is no $\nu \in S_\mathcal{A}$
such that $\widehat{r\nu}f =f$.
Then $[\Phi^N_f] = \{ \Phi^{d+N}_{\hat{\nu}f},  \Phi^{d-N}_{\widehat{r\nu}f} \, | \, d \in \mathbb{Z},
\nu \in S_\mathcal{A} \}$.
Note that $\Phi^N_{\hat{r}f} \not\in [\Phi^N_f]$.

\end{enumerate}
By grouping the last two cases together, we obtain the following proposition.
\begin{prop}
\label{prop:global-eq}
Let $N$ be a finite set of integers and $f \in \overline{L}_\mathcal{A}(|N|)$.
\begin{enumerate}[label=(\roman*)]
\item
\label{prop:global-eq-1}
If $N \in \Sym$ or $\widehat{r\nu}f = f$ for some $\nu \in S_\mathcal{A}$ then
$N$ and $[f]$ induce the following equivalence class of $\mathcal{C}_\mathcal{A} / \mathcal{T}_\mathcal{A}$:
\[
[\Phi^N_f] = \{\Phi^{M}_g \, | \, M \cong N, g \in [f] \}.
\]
\item
\label{prop:global-eq-2}
Otherwise $N$ and $[f]$ induce the following two equivalence classes of $\mathcal{C}_\mathcal{A} / \mathcal{T}_\mathcal{A}$:
\begin{align*}
[\Phi^N_f] & = \{ \Phi^{d+N}_{\hat{\nu}f},  \Phi^{d-N}_{\widehat{r\nu}f} \, | \, d \in \mathbb{Z},
\nu \in S_\mathcal{A} \}; \\
[\Phi^N_{\hat{r}f}] & = \{ \Phi^{d+N}_{\widehat{r\nu}f},  \Phi^{d-N}_{\hat{\nu}f} \, | \, d \in \mathbb{Z},
\nu \in S_\mathcal{A} \}.
\end{align*}
\end{enumerate}
\end{prop}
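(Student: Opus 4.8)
The plan is to compute the orbit $[\Phi^N_f]$ of a single element directly from the definition of $\mathcal{T}_\mathcal{A}$, and then to reduce the entire statement to a single membership question. First I would apply a general group element $\sigma^i \hat{r}^j \hat{\nu}$ to $\Phi^N_f$, using the three action formulas established above, namely $\hat{\nu}\Phi^N_f = \Phi^N_{\hat{\nu}f}$, $\hat{r}\Phi^N_g = \Phi^{-N}_{\hat{r}g}$, and $\sigma\Phi^M_g = \Phi^{1+M}_g$, together with $\hat{r}\hat{\nu} = \widehat{r\nu}$ (reflection commutes with permutations). For $j=0$ this yields $\Phi^{i+N}_{\hat{\nu}f}$ and for $j=1$ it yields $\Phi^{i-N}_{\widehat{r\nu}f}$. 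Hence, \emph{unconditionally},
\[
[\Phi^N_f] = \{ \Phi^{d+N}_{\hat{\nu}f},\ \Phi^{d-N}_{\widehat{r\nu}f} \, | \, d \in \mathbb{Z},\ \nu \in S_\mathcal{A} \},
\]
which is already the first displayed line of part~\ref{prop:global-eq-2}. Replacing $f$ by $\hat{r}f$ and simplifying $\hat{\nu}(\hat{r}f) = \widehat{r\nu}f$ and $\widehat{r\nu}(\hat{r}f) = \hat{\nu}f$ (using $r^2 = 1$ and that $r$ commutes with permutations) gives the second displayed line, the orbit $[\Phi^N_{\hat{r}f}]$.

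The crux is then the single question: \emph{when is $\Phi^N_{\hat{r}f}$ an element of $[\Phi^N_f]$?} I would answer it by matching $\Phi^N_{\hat{r}f}$ against the two families above and invoking the uniqueness of the irreducible representation (Proposition~\ref{prop:uniqueness}); this is legitimate because $\hat{r}f$, $\hat{\nu}f$, and $\widehat{r\nu}f$ are all irreducible, the permutation and reflection operators preserving irreducibility. Matching against the first family forces $N = d+N$, hence $d=0$ for finite nonempty $N$, and then $\hat{r}f = \hat{\nu}f$, i.e.\ $\widehat{r\nu^{-1}}f = f$; matching against the second family forces $N = d-N$, i.e.\ $N \in \Sym$ (the accompanying rule equation being solvable with $\nu = 1$). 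Thus $\Phi^N_{\hat{r}f} \in [\Phi^N_f]$ holds \emph{if and only if} $N \in \Sym$ or $\widehat{r\nu}f = f$ for some $\nu \in S_\mathcal{A}$, which is precisely the hypothesis of part~\ref{prop:global-eq-1}.

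With this equivalence in hand the two parts follow. For part~\ref{prop:global-eq-1}, I would show that each hypothesis rewrites the two ``missing'' families $\Phi^{d+N}_{\widehat{r\nu}f}$ and $\Phi^{d-N}_{\hat{\nu}f}$ as members of the two families already in the orbit: if $N = p-N$ then $d+N = (d+p)-N$ and $d-N = (d-p)+N$ move each neighbourhood to the other congruence type, while if $\widehat{r\nu_0}f = f$ then $\widehat{r\nu}f = \hat{\nu\nu_0^{-1}}f$ collapses reflected rules to permuted ones. Either way the orbit exhausts $\{\Phi^M_g \, | \, M \cong N,\ g \in [f]\}$, the union of all four families. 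For part~\ref{prop:global-eq-2}, the failure of both hypotheses gives $\Phi^N_{\hat{r}f}\notin[\Phi^N_f]$, so the two orbits computed in the first step are distinct and hence disjoint, yielding exactly the two stated classes. The main obstacle I anticipate is the bookkeeping in the membership analysis: one must verify that uniqueness is applied to genuinely irreducible pairs with their true minimal neighbourhoods, and that the two neighbourhood equations $N = d\pm N$ are translated cleanly into ``$d=0$'' versus ``$N\in\Sym$'', so that the two conditions of part~\ref{prop:global-eq-1} emerge separately and exhaustively.
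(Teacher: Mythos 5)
Your proposal is correct and follows essentially the same route as the paper: the paper's argument is the three-case discussion preceding the proposition, which computes the orbit under $\mathcal{T}_\mathcal{A}$ via the same action formulas and the same rewritings ($N = p-N$ to merge the $d+N$ and $d-N$ families, and $\widehat{r\nu_0}f = f$ to collapse reflected rules into permuted ones). Your reorganization into an unconditional orbit formula plus a single membership criterion is a minor repackaging, and in fact makes explicit the disjointness step via Proposition~\ref{prop:uniqueness} that the paper only asserts ("Note that $\Phi^N_{\hat{r}f} \not\in [\Phi^N_f]$").
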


\subsection{Finite Intervals}

This subsection deals with the equivalence classes of global maps or CAs whose neighbourhood
is a subset of a given finite integer interval.
We take a step back and first consider the equivalence relation based only on
permutations of the alphabet and reflection of the configuration.
Let $p$ be a nonnegative integer and consider the interval $N=[-p,p]$.
The map
\[
\mathcal{S}_\mathcal{A}  \times L_\mathcal{A}(2p+1) \to L_\mathcal{A}(2p+1); \
(\alpha, f) \mapsto \hat{\alpha}f
\]
is a group action, and so
is the map
\[
\mathcal{S}_\mathcal{A} \times G_\mathcal{A}(N) \to G_\mathcal{A}(N); \
(\alpha, \Phi^N_f) \mapsto \hat{\alpha} \Phi^N_f.
\]
We see that the map
\[
L_\mathcal{A}(2p+1) / \mathcal{S}_\mathcal{A} \to G_\mathcal{A}(N)/\mathcal{S}_\mathcal{A}; \
[f] \mapsto [\Phi^N_f]
\]
 is a bijection.
The same group acts both on the set of local maps and the set of global maps,
and the equivalence classes associated with the group actions are in one-to-one correspondence.

The size of $[-p,p]$ is always odd.
To deal with all neighbourhood sizes we now investigate the intervals $N = [1,n]$.
If $f \in L_\mathcal{A}(n)$ and $f$ is not constant, then
\[
\hat{r}\Phi^{N}_f = \Phi^{-N}_{\hat{r}f} \not \in G_\mathcal{A}(N).
\]
This means that we cannot use the same group action as above.
We note however that
\[
\sigma^{n+1}\hat{r}\Phi^N_f = \Phi^{N}_{\hat{r}f} \in G_\mathcal{A}(N).
\]
Let us therefore define the group
\[
\hat{\mathcal{S}}_\mathcal{A}(n+1) = \langle \sigma^{n+1}\hat{r}, \hat{\nu} \, | \, \nu \in S_\mathcal{A} \}.
\]
From $(\sigma^{n+1}\hat{r})^{-1} = \sigma^{n+1}\hat{r}$, we conclude that
$\hat{\mathcal{S}}_\mathcal{A}(n+1)$ is isomorphic to $\mathcal{S}_\mathcal{A}$.
It follows, that
\[
\hat{\mathcal{S}}_\mathcal{A}(n+1) \times G_\mathcal{A}(N) \to G_\mathcal{A}(N); \
(\hat{\alpha}, \Phi^N_f) \mapsto \hat{\alpha} \Phi^N_f
\]
is a group action and we conclude that
$L_\mathcal{A}(n) / \mathcal{S}_\mathcal{A}$
and $G_\mathcal{A}(N)/\hat{\mathcal{S}}_\mathcal{A}(n+1)$ can be again put in one-to-one correspondence.

Let us now include the shift operator and let us consider the group $\mathcal{T}_\mathcal{A}$
of symmetry operators.
Without loss of generality we now consider the intervals $[1,n]$;
the counting is identical for the intervals of the form $[p, p+n-1]$.
To start with, we present the following example.
We ask which equivalence classes of $G_2([1,4])/{\cong}$ are induced
by some of the local rules of $\overline{L}_2(3)$ and neighbourhoods that are subsets of the interval $[1,4]$.
There are  four 3-element subsets of the integer interval $[1,4]$, which we denote by
$A=[1,3]$, $B=[2,4]$, $C=\{1,2,4\}$, and $D=\{1,3,4\}$.
Let us now consider the equivalence classes
$[W_{54}]$, $[W_{57}]$, $[W_{110}] \in L_2(3) / \mathcal{S}_2$.
The types of these equivalence classes are $\langle r \rangle$, $\langle rc \rangle$, and $\langle 1 \rangle$, respectively.
Note that $\sigma^4 \hat{r} \Phi^A_f = \Phi^A_{\hat{r}f}$, $\sigma \Phi^A_f = \Phi^B_f$, and
$\sigma^5 \hat{r} \Phi^C_f = \Phi^D_{\hat{r}f}$.
We see that
\begin{align*}
[\Phi^{A}_{W_{54}}] & = \{ \Phi^{A}_{W_{54}}, \Phi^{A}_{\hat{c}W_{54}}, \Phi^{B}_{W_{54}},
        \Phi^{B}_{\hat{c}W_{54}} \}; \\
[\Phi^{A}_{W_{57}}] & = \{ \Phi^{A}_{W_{57}}, \Phi^{A}_{\hat{r}W_{54}}, \Phi^{B}_{W_{57}},
        \Phi^{B}_{\hat{r}W_{57}} \}; \\
[\Phi^{A}_{W_{110}}] & = \{ \Phi^{A}_{W_{110}}, \Phi^{A}_{\hat{c}W_{110}}, \Phi^{A}_{\hat{r}W_{110}},
        \Phi^{A}_{\widehat{rc}W_{110}},
        \Phi^{B}_{W_{110}}, \Phi^{B}_{\hat{c}W_{110}}, \Phi^{B}_{\hat{r}W_{110}},
        \Phi^{B}_{\widehat{rc}W_{110}}
\}; \\
[\Phi^{C}_{W_{54}}] & = \{ \Phi^{C}_{W_{54}}, \Phi^{C}_{\hat{c}W_{54}}, \Phi^{D}_{W_{54}},
        \Phi^{D}_{\hat{c}W_{54}} \}; \\
[\Phi^{C}_{W_{57}}] & = \{ \Phi^{C}_{W_{57}}, \Phi^{C}_{\hat{r}W_{57}}, \Phi^{D}_{W_{57}},
        \Phi^{D}_{\hat{r}W_{57}} \}; \\
[\Phi^{C}_{W_{110}}] & = \{ \Phi^{C}_{W_{110}}, \Phi^{C}_{\hat{c}W_{110}}, \Phi^{D}_{\hat{r}W_{110}},
        \Phi^{D}_{\widehat{rc}W_{110}} \}; \\
[\Phi^{C}_{\hat{r}W_{110}}] & = \{ \Phi^{C}_{\hat{r}W_{110}}, \Phi^{C}_{\widehat{rc}W_{110}},
        \Phi^{D}_{W_{110}}, \Phi^{D}_{\hat{c}W_{110}} \}.
\end{align*}
The neighbourhood $A$ and each of the equivalence classes $[W_{110}]$, $[W_{54}]$, and $[W_{57}]$
induce exactly one equivalence class of global maps in $G_2([1,4])/{\cong}$.
The situation is different for the neighbourhood $C$.
In this case, $C$ and each of $[W_{54}]$ and $[W_{57}]$ induce one equivalence class of global maps,
but $C$ and $[W_{110}]$ induce two.
We explain this by observing that $\hat{r}W_{110} \ne W_{110}$ and
$\hat{r}W_{110} \ne \hat{c}W_{110}$.
This example demonstrates that the equivalence classes of $G_2([1,4])/{\cong}$ can be determined by examining
the equivalence classes of $\overline{L}(n)/\mathcal{S}_2$, $0 \le n \le 4$, and the neighbourhoods contained
in $[1,4]$, while taking into account the types of these equivalence classes.

Earlier on, we defined the relation $\cong$ on the subsets of integers.
We restrict this relation now to the power set of a given subset $N$ of integers, and denote the equivalence
classes by $\mathcal{P}(N)/{\cong}$.
Thus $\mathcal{P}(N)/{\cong}$ is the set $ \{ [M] \, | \, M \subseteq N \}$, where $[M]$ is the equivalence
class $\{ M^\prime \subseteq N \, | \, M^\prime \cong M \}$.
We are using this notation in the following proposition.
\begin{prop}
\label{prop:global-shift}
Suppose $N = [1, n]$ is an integer interval. Then
\[
|G_2(N)/{\cong}| = \sum_{[M]\in \mathcal{P}(N)/{\cong}} |\overline{L}_2(|M|)/\mathcal{S}_2| +
      \sum_{[M]\in \mathcal{P}(N)/{\cong}, M \not\in Sym} (|t_{|M|}^{\langle 1 \rangle}| + (|t_{|M|}^{\langle c \rangle}|).
\]
\end{prop}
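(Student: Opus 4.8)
The plan is to reduce the count to full $\mathcal{T}_2$-orbits and then read off their multiplicities from Proposition~\ref{prop:global-eq}. First I would use Proposition~\ref{prop:props}(ii) together with the uniqueness in Proposition~\ref{prop:uniqueness} to write $G_2(N)$ as the disjoint union $\bigcup_{M\subseteq N}\overline{G}_2(M)$, each $\overline{G}_2(M)$ being in bijection with $\overline{L}_2(|M|)$ via $g\mapsto\Phi^M_g$; thus every element of $G_2(N)$ is $\Phi^M_g$ for a unique $M\subseteq N$ and a unique irreducible $g$. Since $\cong$ on $G_2(N)$ is the restriction of the $\mathcal{T}_2$-equivalence on $\mathcal{C}_2$, and restricting an equivalence relation to a subset neither merges nor splits classes, $|G_2(N)/{\cong}|$ equals the number of full $\mathcal{T}_2$-orbits that meet $G_2(N)$.

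Next I would show that each such orbit is pinned down, up to the at most twofold splitting of Proposition~\ref{prop:global-eq}, by a pair $([M],[g])$ with $[M]\in\mathcal{P}(N)/{\cong}$ and $[g]\in\overline{L}_2(|M|)/\mathcal{S}_2$. The point is that every generator of $\mathcal{T}_2$ keeps $\Phi^M_g$ inside the set
\[
S_{[M],[g]}=\{\,\Phi^{M'}_{g'}\,|\,M'\cong M,\ g'\in[g]\,\},
\]
since $\sigma\Phi^M_g=\Phi^{1+M}_g$ and $\hat{r}\Phi^M_g=\Phi^{-M}_{\hat{r}g}$ only pass to congruent neighbourhoods while $\hat{\nu}\Phi^M_g=\Phi^M_{\hat{\nu}g}$ stays on $M$, and in all three cases the new rule lies in $[g]$. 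Hence the orbit of $\Phi^M_g$ is contained in $S_{[M],[g]}$, and these sets are pairwise disjoint for distinct pairs, because an orbit determines both the congruence class of its neighbourhood and the $\mathcal{S}_2$-class of its rule. Choosing one representative $M\subseteq N$ per class $[M]$ and letting $g$ range over $\overline{L}_2(|M|)/\mathcal{S}_2$ therefore organizes all orbits meeting $G_2(N)$ into disjoint fibres indexed by these pairs, and it remains to count the orbits in each fibre.

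I would then apply Proposition~\ref{prop:global-eq}, noting that for $S_2=\{1,c\}$ the hypothesis ``$\widehat{r\nu}g=g$ for some $\nu$'' means $\hat{r}g=g$ or $\widehat{rc}g=g$, i.e.\ $\type([g])\in\{\langle r\rangle,\langle rc\rangle,\mathcal{S}_2\}$. Proposition~\ref{prop:global-eq}(i) then says $S_{[M],[g]}$ is a single orbit whenever $M\in\Sym$ or $\type([g])\in\{\langle r\rangle,\langle rc\rangle,\mathcal{S}_2\}$, while Proposition~\ref{prop:global-eq}(ii) says it splits into the two orbits $[\Phi^M_g]$ and $[\Phi^M_{\hat{r}g}]$ exactly when $M\notin\Sym$ and $\type([g])\in\{\langle 1\rangle,\langle c\rangle\}$. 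Summing the baseline of one orbit per pair over all classes gives $\sum_{[M]}|\overline{L}_2(|M|)/\mathcal{S}_2|$, and the single extra orbit contributed in each split case, summed over non-symmetric $[M]$ and over the irreducible classes of type $\langle 1\rangle$ or $\langle c\rangle$, contributes $\sum_{[M]\notin\Sym}\bigl(|\overline{t}^{\langle 1\rangle}_{|M|}|+|\overline{t}^{\langle c\rangle}_{|M|}|\bigr)$, yielding the stated identity.

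The main obstacle is the split case: I must check that both orbits produced there actually meet $G_2(N)$, so that each is counted. This is exactly where irreducibility is used. Because $\hat{r}$ preserves both irreducibility and arity, $g$ and $\hat{r}g$ both lie in $\overline{L}_2(|M|)$, so $\Phi^M_g$ and $\Phi^M_{\hat{r}g}$ both belong to $\overline{G}_2(M)\subseteq G_2(N)$; moreover $\type([g])\in\{\langle 1\rangle,\langle c\rangle\}$ forces $\hat{r}g\neq g$, so by Proposition~\ref{prop:uniqueness} these two global maps are distinct and by Proposition~\ref{prop:global-eq} they lie in distinct orbits. Thus both split orbits are witnessed inside $G_2(N)$, whereas in the single-orbit case the representative $\Phi^M_g$ already lies in $G_2(N)$. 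The remaining routine checks are that $\Sym$-membership descends to $\mathcal{P}(N)/{\cong}$ (guaranteed by the fact that $N\in\Sym$ implies $[N]\subset\Sym$) and that $\type([g])$ is constant on each $\mathcal{S}_2$-class, both of which make the two sums well-defined.
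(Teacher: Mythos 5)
Your proposal is correct and takes essentially the same route as the paper's own proof: decompose $G_2(N)$ into the global maps $\Phi^M_g$ with $M \subseteq N$ and $g$ irreducible, then apply Proposition~\ref{prop:global-eq} to count one equivalence class per pair $([M],[g])$ plus one extra class exactly when $M \not\in \Sym$ and $[g]$ has type $\langle 1 \rangle$ or $\langle c \rangle$. Your additional explicit checks (disjointness of the fibres $S_{[M],[g]}$, both split orbits meeting $G_2(N)$, well-definedness of $\Sym$-membership and type on classes) only spell out details the paper leaves implicit, and your reading of the statement's $t^{\langle 1 \rangle}_{|M|}$, $t^{\langle c \rangle}_{|M|}$ as the irreducible counts $\overline{t}^{\langle 1 \rangle}_{|M|}$, $\overline{t}^{\langle c \rangle}_{|M|}$ is the intended one, as confirmed by the paper's proof and the numerical evaluations that follow it.
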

\begin{proof}
By Proposition~\ref{prop:props} and the definition of $\overline{G}_2(M)$, we see that
\[
G_2(N) = \bigcup_{M\subseteq N} \{ \Phi^M_f \, | \, f \in \overline{L}_2(|M|).
\]
Let $M \subseteq N$ and put $m=|M|$.
Suppose $M \in \Sym$, and $f \in \overline{L}_2(m)$.
If $M^\prime \subseteq N$ and $f^\prime \in \overline L_2(|M^\prime|)$,
then by Proposition~\ref{prop:global-eq}\ref{prop:global-eq-1}
$\Phi^M_f \cong \Phi^{M^\prime}_{f^\prime}$ if and only if
$M \cong M^\prime$ and $f^\prime \in [f]$.
This shows that
\begin{equation}
\label{proof:global-shifts-eq-1}
[\Phi^M_f] = \{ \Phi^{M^\prime}_{f^\prime} \, | \, M^\prime \in [M] \in N / {\cong},
f^\prime \in [f] \in \overline{L}_2(m)  \}
\end{equation}
 is one equivalence class of $G(N) / {\cong}$.
We conclude that there are $|L_2(m) / \mathcal{S}_2| $ equivalence classes whose members are the global maps
that are induced by irreducible local rules and a neighbourhood $M^\prime \in [M]$.

Suppose $M \subseteq N$ and $M \not\in \Sym$.
We first consider the case $f \in \overline{L}_2(m)$, and $\hat{r}f \in \{f, \hat{c}f \}$.
Then $[\Phi^M_f]$ is the same as in Equation~\ref{proof:global-shifts-eq-1}.
The relation $\hat{r}f \in \{f, \hat{c}f \}$ is satisfied, if
$[f] \in t^{\langle r \rangle}_{m}$, $[f] \in t^{\langle rc \rangle}_{m}$,
or $[f] \in t^{\mathcal{S}_2}_{m}$.
Hence we obtain $|t^{\langle r \rangle}_{m}| + |t^{\langle rc \rangle}_{m}|
+ |t^{\mathcal{S}_2}_{m}|$ equivalence classes of global maps.

We now consider the case $f \in \overline{L}_2(m)$, and $\hat{r}f \not\in \{f, \hat{c}f \}$.
Then by Proposition~\ref{prop:global-eq}\ref{prop:global-eq-2}, $[f]$ and $M$ induce the two equivalence classes
\begin{align*}
[\Phi^M_f] & = \{ \Phi^{d+M}_f, \Phi^{d+M}_{\hat{c}f} \, | \, d + M \subseteq N \}
\cup  \{ \Phi^{d-M}_{\hat{r}f}, \Phi^{d-M}_{\widehat{rc}f} \, | \, d  -M \subseteq N \}; \\
[\Phi^M_{\hat{r}f}] & = \{ \Phi^{d+M}_{\hat{r}f}, \Phi^{d+M}_{\widehat{rc}f} \, | \, d + M \subseteq N \}
\cup  \{ \Phi^{d-M}_{f}, \Phi^{d-M}_{\hat{c}f} \, | \, d  -M \subseteq N \}.
\end{align*}
The relation $\hat{r}f \not\in \{f, \hat{c}f \}$ is satisfied, if
$[f] \in t^{\langle 1 \rangle}_{m}$ or $[f] \in t^{\langle c \rangle}_{m}$.
Hence we obtain $2 (|t^{\langle 1 \rangle}_{m}| + |t^{\langle c \rangle}_{m}|)$
equivalence classes of global maps.
We remark that
\[
|t^{\langle r \rangle}_{m}| + |t^{\langle rc \rangle}_{m}| + |t^{\mathcal{S}_2}_{m}|
+ 2 (|t^{\langle 1 \rangle}_{m}| + |t^{\langle c \rangle}_{m}|)
= \overline{L}_2(m) / \mathcal{S}_2 + |t^{\langle 1 \rangle}_{m}| + |t^{\langle c \rangle}_{m}|.
\]
\end{proof}
The equivalence classes of $\mathcal{P}([1,n])/{\cong}$ for small values of $n$ are given by
\begin{align*}
\mathcal{P}([1,0]) & : [\emptyset] \\
\mathcal{P}([1,1]) & : [\emptyset], [\{1\}];  \\
\mathcal{P}([1,2]) & : [\emptyset], [\{1\}],  [\{1,2\}]; \\
\mathcal{P}([1,3]) & : [\emptyset], [\{1\}],  [\{1,2\}], [\{1,3\}], [\{1,2,3\}]; \\
\mathcal{P}([1,4]) & : [\emptyset], [\{1\}],  [\{1,2\}], [\{1,3\}], [\{1,4\}], [\{1,2,3\}], [\{1,2,4\}], [\{1,2,3,4\}]; \\
\mathcal{P}([1,5]) & : [\emptyset], [\{1\}],  [\{1,2\}], [\{1,3\}], [\{1,4\}], [\{1,5\}], [\{1,2,3\}], [\{1,2,4\}], [\{1,2,5\}], \\
        &  [\{1,3,5\}], [\{1,2,3,4\}], [\{1,2,3,5\}], [\{1,2,4,5\}], [\{1,2,3,4,5\}].
\end{align*}
From these equivalence classes, the following are not reflection-symmetrical: $[\{1,2,4\}]$, $[\{1,2,5\}]$, and $[\{1,2,3,5\}]$.
Therefore by Proposition~\ref{prop:global-shift} and Table~\ref{tab:s2_fs}
\begin{align}
|G_2([1,0])/{\cong}| & = |\overline{L}_2(0) / \mathcal{S}_2| = 1; \notag \\
|G_2([1,1])/{\cong}| & = |\overline{L}_2(0) / \mathcal{S}_2| + |\overline{L}_2(1) / \mathcal{S}_2| = 3; \notag \\
|G_2([1,2])/{\cong}| & = |\overline{L}_2(0) / \mathcal{S}_2| + |\overline{L}_2(1) / \mathcal{S}_2|
+ |\overline{L}_2(2) / \mathcal{S}_2| = 7; \notag \\
|G_2([1,3])/{\cong}| & = |\overline{L}_2(0) / \mathcal{S}_2| +  |\overline{L}_2(1) / \mathcal{S}_2|
+ 2 |\overline{L}_2(2) / \mathcal{S}_2|
 + |\overline{L}_2(3) / \mathcal{S}_2| = 85; \label{eq:classes85}\\
|G_2([1,4])/{\cong}| & = |\overline{L}_2(0) / \mathcal{S}_2| + |\overline{L}_2(1) / \mathcal{S}_2|
+ 3 |\overline{L}_2(2) / \mathcal{S}_2| + 2 |\overline{L}_2(3) / \mathcal{S}_2| \notag \\
& + |\overline{t}^{\langle 1 \rangle}_3| + |\overline{t}^{\langle c \rangle}_3|
+ |\overline{L}_2(4) / \mathcal{S}_2| = 16\,656; \notag \\
|G_2([1,5])/{\cong}| & = |\overline{L}_2(0) / \mathcal{S}_2| + |\overline{L}_2(1) / \mathcal{S}_2|
+ 4 |\overline{L}_2(2) / \mathcal{S}_2|+ 4 |\overline{L}_2(3) / \mathcal{S}_2| \notag \\
& + 2|\overline{t}^{\langle 1 \rangle}_3| + 2|\overline{t}^{\langle c \rangle}_3|
+ 3 |\overline{L}_2(4) / \mathcal{S}_2| + |\overline{t}^{\langle 1 \rangle}_4| + |\overline{t}^{\langle c \rangle}_4| \notag \\
& +  |\overline{L}_2(5) / \mathcal{S}_2| = 1\,074\,020\,544 \notag.
\end{align}
The values for n=1,2,3 agree with~\cite{RUIVO2018280}, but the value for n=4 (interval $[1,4]$) differs and merits further investigation.

\subsection{Scaling}

Let us consider a CA $A = (\mathcal{A}, N, f)$, where $N = \{0,2\}$.
If $A$ is initialised with the configuration $x^0$, then after $t$ iterations
of $\Phi^N_f$, the configuration of $A$ is $x^t = (\Phi^N_f)^t(x^0)$.
The value of $x^t_i$ can be calculated recursively as follows:
\[
x^t_i =  f(x^{t-1}_ix^{t-1}_{i+2}) = f(f(x^{t-2}_ix^{t-2}_{i+2}) f(x^{t-2}_{i+2}x^{t-2}_{i+4})  ) = \ldots.
\]
We see that $x^t_i$ depends on $x^{t-1}_i$ and on $x^{t-1}_{i+2}$.
If we go further back until we reach $x^0$, we observe that $x^t_i$ depends on $x^0_i,x^0_{i+2}, \ldots, x^0_{i+2t}$.
If $i$ is even then $x^t_i$ depends only on the even coordinates of $x^0$,
and if $i$ is odd, then it depends only on the odd coordinates.
We construct two configurations $y^0$ and $z^0$, by
selecting either the even or odd coordinates of $x^0$.
We put $y^0 = \ldots x_{-2}.x_0x_2 \ldots$ and $z^0 = \ldots x_{-1}.x_1x_3 \ldots$,
and consider the CA $B = (\mathcal{A}, M, f)$ where $M = \{0,1\}$.
It is easy to see (and will be proven shortly) that if $i=2j$ is even,
then $x^t_{i}= (\Phi^M_f)^t(y^0)_j$, and if
$i=2j+1$ is odd then $x^t_{i} = (\Phi^M_f)^t(z^0)_j$.
We obtain $x^t$
by interleaving (or should we say entangling) $y^t$ and $z^t$,
\[
x^t = \ldots y^t_{-1}z^t_{-1}.y^t_0z^t_0y^t_1z^t_1 \ldots.
\]
The two CAs $A$ and $B$ perform the same computations provided they are initialised accordingly.
Note that we need two copies of $B$ to compute the evolution of $A$, and that, on the other hand,
$A$ can compute two different evolutions of $B$ simultaneously.
Nevertheless, we consider the CAs $A$ and $B$ as equivalent.

Earlier on, we have seen how the isometric geometrical transformations of
translation and reflection on the set of configurations led to symmetry transformations on the rule space.
We now study the uniform scaling of configurations and neighbourhoods.
Let $N_1 = \{j_1, \ldots, j_{n}\} $ be a neighbourhood, let $p$ be a positive integer, and
let $N_2$ be the neighbourhood $N_1$ scaled by the factor $p$, that is
$N_2 = p N_1 = \{p j_1,\ldots,p j_{n}\}$.
We will show how $\Phi_f^{N_2}$ can be reduced to $\Phi_f^{N_1}$.
Define a map $\gamma_p: \mathcal{A}^\mathbb{Z} \to \mathcal{A}^\mathbb{Z}$  by
$\gamma_p(x)_i = x_{pi}$.
The image of the configuration $x$ under $\gamma_p$ contains every $p$-th symbol from $x$.
If $i = qp + r$, $0 \le r < p$, then $\Phi^{N_2}_f(x)_i$ can be expressed as
\begin{multline}
\label{eq:scal}
\Phi^{N_2}_f\bigl(x\bigr)_{qp+r}  =  \Phi^{N_2}_f\bigl(\sigma^r x\bigr)_{pq} =
f\bigl((\sigma^rx)_{pj_1 + pq} \ldots (\sigma^rx)_{pj_{n} + pq}\bigr) \\
 =  f\bigl( \gamma_p(\sigma^rx)_{j_1+q} \ldots \gamma_p(\sigma^rx)_{j_{n}+q}\bigr)
= \Phi^{N_1}_f\bigl(\gamma_p(\sigma^rx)\bigr)_q.
\end{multline}
Based on this relation, we define the
scaling map $s_p : \mathcal{A}^{\mathbb Z} \to (\mathcal{A}^{\mathbb Z})^p$ by
\[
s_p(x) = \bigl(\gamma_p(x), \gamma_p(\sigma x), \ldots , \gamma_p(\sigma^{p-1} x) \bigr).
\]
The map $s_p$ is one-to-one and onto.
Thus the inverse map $s_p^{-1} : (\mathcal{A}^\mathbb{Z})^p \to \mathcal{A}^\mathbb{Z}$ exists.
The preimage of a tuple $(x^{(0)}, \ldots, x^{(p-1)})$ of configurations is
\begin{equation*}
s^{-1}_p(x^{(0)}, \ldots, x^{(p-1)}) =
\ldots x^{(0)}_{-1}x^{(1)}_{-1}\ldots x^{(p-1)}_{-1}.x^{(0)}_0x^{(1)}_0 \ldots x^{(p-1)}_0
x^{(0)}_1x^{(1)}_1 \ldots x^{(p-1)}_1 \ldots.
\end{equation*}
While permutations $\nu \in S_\mathcal{A}$, the reflection $r$, and the shift $\sigma$ of the configuration
are self-homeomorphisms $\mathcal{A}^\mathbb{Z} \to \mathcal{A}^\mathbb{Z}$,
the scaling map $s_p$ is a homeomorphism $\mathcal{A}^\mathbb{Z} \to (\mathcal{A}^\mathbb{Z})^p$.
If $i = qp + r$, $0 \le r < p$, then the $i$th coordinate of the preimage is
\begin{equation}
\label{eq:idx}
s^{-1}_p(x^{(0)}, \ldots, x^{(p-1)})_{qp+r} = x^{(r)}_q.
\end{equation}
Let $\Phi$ be a map on $\mathcal{A}^{\mathbb{Z}}$.
If $y = (x^{(0)}, x^{(1)}, \ldots)$ is a tuple of configurations,
we define
the value of $\Phi(y)$ by elementwise application:
$ \Phi (y) = \Phi(x^{(0)}, x^{(1)}, \ldots) = (\Phi(x^{(0)}), \Phi(x^{(1)}), \ldots)$.
This notation enables us to introduce the scaling operator $\hat{s}_p$
by defining
\[
\hat{s}_p \Phi^{N_1}_f = s^{-1}_p \circ \Phi^{N_1}_f \circ s_p.
\]
Then
\begin{eqnarray*}
\hat{s}_p \Phi^{N_1}_f\bigl(x\bigr) & = & s_p^{-1} \circ \Phi^{N_1}_f \circ s_p \bigl(x\bigr)
= s_p^{-1} \circ \Phi^{N_1}_f \bigl(\gamma_p(x), \ldots, \gamma_p(\sigma^{p-1} x)\bigr) \\
& = & s_p^{-1} \circ  \Bigl(\Phi^{N_1}_f\bigl(\gamma_p(x)\bigr), \ldots, \Phi^{N_1}_f\bigl(\gamma_p(\sigma^{p-1} x)\bigr)
\Bigr). \\
\end{eqnarray*}
From Eq.~(\ref{eq:scal}) and (\ref{eq:idx}) follows
\[
\hat{s}_p \Phi^{N_1}_f\bigl(x\bigr)_{pq+r} = \Phi^{N_1}_f\bigl(\gamma_p(\sigma^{r} x)\bigr)_q
= \Phi^{N_2}_f\bigl(x\bigr)_{qp+r},
\]
which shows that $\Phi^{N_2}_f = \hat{s}_p \Phi^{N_1}_f$.
We try to invert the scaling operator by writing
\[
s_p \circ \Phi^{N_2}_f \circ s^{-1}_p.
\]
The resulting map, however, is not a global map on $\mathcal{A}^\mathbb{Z}$,
but a map on $(\mathcal{A}^\mathbb{Z})^p$.
A tuple of configurations $(x^{(0)}, \ldots, x^{(p-1)})$ is processed at once:
\[
\bigl(s_p \circ \Phi^{N_2}_f \circ s^{-1}_p \bigr) \bigl(x^{(0)}, \ldots, x^{(p-1)}\bigr)
= \bigl(\Phi^{N_1}_f(x^{(0)}), \ldots, \Phi^{N_1}_f(x^{(p-1)})\bigr).
\]
The $p$ configurations are merged together into one configuration by the map $s^{-1}_p$, the
CA $(\mathcal{A}, N_2, f)$ processes the one configuration,
and the successor configuration is then split up again by $s_p$ into $p$ configurations.
This is consistent with the example above where the CA $A$ simultaneously processes
two configurations of CA $B$.
To obtain an operational definition of the inverse scaling operator, we
therefore choose the following approach.
If there is a positive integer $p$ such that $N = pM$, we define
\[
\hat{s}^{-1}_p \Phi^{N}_f (x) = \gamma_p \circ \Phi^{N}_f \circ s^{-1}_p(x, \ldots, x).
\]
The initial configuration $x$ is copied $p$ times to obtain the tuple $(x, \ldots, x)$.
The tuple is merged together by the map $s^{-1}_p$, processed by the CA $(\mathcal{A}, N, f)$, and
then only every $p$th coordinate is selected, which results in the target configuration.
By the definition of the operators, we see that
\[
\Phi^{N_2}_f = \hat{s}_p \Phi^{N_1}_f \ \ \ \mbox{ if and only if} \ \ \
\Phi^{N_1}_f = \hat{s}^{-1}_p \Phi^{N_2}_f.
\]

We extend the scaling operator to rational numbers $v=p/q$ provided that $v N_1$ is a set of integers, by
defining
\[
\hat{s}_v \Phi^{N_1}_f = \hat{s}^{-1}_q \hat{s}_p \Phi^{N_1}_f.
\]
If $p$ is an integer, the operator $\hat{s}_p$ is a map on $\mathcal{C}_A$ and is defined for all CAs.
It is one-to-one, but not onto, with the exception of $\hat{s}_1$, which is the identity map.
If $v$ is a positive rational, but not an integer, the operator $\hat{s}_v$  is only a partial map on $\mathcal{C}_A$.

Earlier on we defined the congruence relation $\cong$ on $\mathbb{Z}$.
We now define a similarity relation $\sim$ on $\mathbb{Z}$ that takes into account the scaling of neighbourhoods.
Let $M \subseteq \mathbb{Z}$ and $N \subseteq \mathbb{Z}$.
Then $M \sim N$ if and only if there is a set $M^\prime \subseteq \mathbb{Z}$,
and a set $N^\prime \subseteq \mathbb{Z}$, and a positive rational $v$ such that
$M \cong M^\prime$, $N \cong N^\prime$, and $N^\prime = v M^\prime$.
Note that $M \cong N$ implies $M \sim N$ (choose $v=1$).

In a similar manner, we define the equivalence relation $\sim$ on $\mathcal{C}_\mathcal{A}$.
If $\Phi^A$ and $\Phi^B$ are two global maps on $\mathcal{A}^\mathbb{Z}$,
then $\Phi^A \sim \Phi^B$ if and only if
there are global maps $\Phi^{A^\prime}$ and $\Phi^{B^\prime}$, and a positive rational $v$,
such that $\Phi^A \cong \Phi^{A^\prime}$, $\Phi^B \cong \Phi^{B^\prime}$,
and $\Phi^{B^\prime} = \hat{s}_v \Phi^{A^\prime}$.

We give an example.
Consider the local rule $W^3_{90}$ given by $W^3_{90}(a_0a_1a_2) = a_0 \oplus a_2$,
where the symbol $\oplus$ denotes the exclusive-or logical operation.
Since $W^2_6(b_0b_1) = b_0 \oplus b_1$, we have
$\Phi^{[-1,1]}_{W^3_{90}} = \Phi^{\{-1,1\}}_{W^2_6}$.
The neighbourhood $\{-1,1\}$ is not the multiple of another neighbourhood, but
$\{0,2\}$ is, since $\{0,2\} = 2 \{0,1\}$.
Therefore
\[
\Phi^{[-1,1]}_{W^3_{90}} = \sigma^{-1} \Phi^{[0,2]}_{W^3_{90}} =
\sigma^{-1} \Phi^{\{0,2\}}_{W^2_6} = \sigma^{-1} \hat{s}_2 \Phi^{[0,1]}_{W^2_6},
\]
and
\[
\Phi^{[-1,1]}_{W^3_{90}} \sim \Phi^{[0,1]}_{W^2_6} = \Phi^{[-1,1]}_{W^3_{60}}.
\]

The next statement follows from a slight adaptation of Proposition~\ref{prop:global-shift} and its proof.
\begin{prop}
\label{prop:global-scaling}
Suppose $N = [1, n]$ is an integer interval. Then
\[
|G_2(N)/{\sim}| = \sum_{[M]\in \mathcal{P}(N)/{\sim}} |\overline{L}_2(|M|)/\mathcal{S}_2| +
      \sum_{[M]\in \mathcal{P}(N)/{\sim}, M \not\in Sym} (|t_{|M|}^{\langle 1 \rangle}| + (|t_{|M|}^{\langle c \rangle}|).
\]
\end{prop}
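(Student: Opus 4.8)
The plan is to re-run the proof of Proposition~\ref{prop:global-shift} and to isolate precisely where the enlarged symmetry changes the bookkeeping, starting from the same decomposition
\[
G_2(N) = \bigcup_{M\subseteq N} \{ \Phi^M_f \, | \, f \in \overline{L}_2(|M|) \},
\]
in which each global map occurs once through its unique irreducible representation (Proposition~\ref{prop:uniqueness} and Proposition~\ref{prop:props}). The new ingredient is the scaling operator, so the first task is to describe how $\hat{s}_p$ sits next to the generators of $\mathcal{T}_\mathcal{A}$. From $\hat{s}_p\Phi^M_f = \Phi^{pM}_f$ together with $\hat{\nu}\Phi^M_f = \Phi^M_{\hat{\nu}f}$, $\hat{r}\Phi^M_f = \Phi^{-M}_{\hat{r}f}$ and $\sigma\Phi^M_f = \Phi^{1+M}_f$, one reads off the relations $\hat{s}_p\hat{\nu} = \hat{\nu}\hat{s}_p$, $\hat{s}_p\hat{r} = \hat{r}\hat{s}_p$ and $\hat{s}_p\sigma = \sigma^p\hat{s}_p$. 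The decisive structural fact is that scaling leaves the local rule untouched and acts on a map only through its neighbourhood, $M \mapsto vM$. Consequently the $\sim$-orbit of $\Phi^M_f$ is obtained from its $\cong$-orbit by additionally allowing the neighbourhood to be replaced by any congruent copy of $vM$, and scaling creates no new relations among local rules.

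Two invariance statements then make the transcription possible. First, membership in $\Sym$ is preserved by $\sim$: if $M = j - M$ with $m \in M$, then $j-m\in M$, so $vj = vm + v(j-m)$ is a sum of two elements of the integer set $vM$ and hence an integer, giving $vM = vj - vM$; the converse is analogous, and since $\Sym$ is a union of congruence classes it follows that the condition $M \notin \Sym$ depends only on the similarity class $[M]$, so the second sum is well defined. Second, and this is the point requiring care, the doubling that occurs for non-symmetric neighbourhoods must survive the passage to $\sim$: when $M \notin \Sym$ and $[f]$ has type $\langle 1 \rangle$ or $\langle c \rangle$, I must show that $\Phi^M_f \not\sim \Phi^M_{\hat{r}f}$.

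This is where the main obstacle lies. Because scalings compose ($\hat{s}_{v_2}\hat{s}_{v_1}=\hat{s}_{v_1 v_2}$) and congruences are absorbed by the commutation relations above, a single $\sim$-step can be normalised to one scaling factor, so $\Phi^M_f \sim \Phi^M_{\hat{r}f}$ would force a positive rational $v$ with $vM \cong M$ such that the rule $f$, after reflection, lands back on the $M$-translate family. The resolution is a span argument: $\max M - \min M$ scales by $v$ under $\hat{s}_v$, while translation and reflection preserve it, so $vM \cong M$ with $|M| \ge 2$ forces $v=1$; but then the relevant orbit collapses to the $\cong$-orbit, which by the analysis underlying Proposition~\ref{prop:global-eq}\ref{prop:global-eq-2} does not contain $\Phi^M_{\hat{r}f}$. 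Hence the two classes remain distinct and the doubling persists exactly as in the congruence case.

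With these points settled, the case analysis of Proposition~\ref{prop:global-shift} transfers verbatim with $\cong$ replaced by $\sim$ throughout: over each class $[M] \in \mathcal{P}(N)/{\sim}$ the pairs $([M],[f])$ contribute $|\overline{L}_2(|M|)/\mathcal{S}_2|$ equivalence classes of global maps, together with an extra $|t^{\langle 1 \rangle}_{|M|}| + |t^{\langle c \rangle}_{|M|}|$ classes precisely when $M \notin \Sym$. Summing over $\mathcal{P}(N)/{\sim}$ yields the asserted formula.
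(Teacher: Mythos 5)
Your proposal is correct and takes essentially the same route as the paper: the paper's own ``proof'' is merely the remark that the statement follows from a slight adaptation of Proposition~\ref{prop:global-shift} and its proof, and your argument is exactly that adaptation carried out explicitly. The three points you isolate---that scaling fixes the irreducible local rule and acts only on the neighbourhood, that $\Sym$ is a union of ${\sim}$-classes (so the second sum is well defined), and the span argument forcing $v=1$ so that the two classes $[\Phi^M_f]$ and $[\Phi^M_{\hat{r}f}]$ for $M \not\in \Sym$ of type $\langle 1 \rangle$ or $\langle c \rangle$ remain distinct---are precisely the content the paper leaves implicit.
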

Compared to the relation $\cong$, the number of equivalence classes remains unchanged
if $N = [1,0]$, $[1,1]$, or $[1,2]$.
The next two results take into account that $\{1,2\} \sim \{1,3\} \sim \{1,4\} $:
\begin{align}
|G_2([1,3])/{\sim}| & = \sum_{m=0}^3 |\overline{L}_2(m) / \mathcal{S}_2| =
|G_2([1,3])/{\cong}| - |\overline{L}_2(2) / \mathcal{S}_2| = 81; \label{eq:classes81}  \\
|G_2([1,4])/{\sim}| & = \sum_{m=0}^2 |\overline{L}_2(m) / \mathcal{S}_2|
+ 2 |\overline{L}_2(3) / \mathcal{S}_2|
+ |\overline{t}^{\langle 1 \rangle}_3| + |\overline{t}^{\langle c \rangle}_3| +  |\overline{L}_2(4) / \mathcal{S}_2| \notag \\
& = |G_2([1,4])/{\cong}| - 2 |\overline{L}_2(2) / \mathcal{S}_2| = 16\,648 \notag.
\end{align}
\begin{prop}
\label{commutators}
Let $\hat{\nu}$ be any of the permutation operators and let $v=p/q$ be a positive rational. Then
\begin{multicols}{3}
    \begin{enumerate}[label=(\roman*)]
        \item \label{comm:pi-s} $\hat{\nu} \hat{s}_v = \hat{s}_v \hat{\nu}$
        \item \label{comm:r-s} $\hat{r} \hat{s}_v = \hat{s}_v \hat{r}$
        \item \label{comm:sigma-s} $\sigma^p \hat{s}_v = \hat{s}_v \sigma^q$.
    \end{enumerate}
    \end{multicols}
\end{prop}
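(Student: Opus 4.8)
The plan is to run the entire argument at the level of neighbourhood/local-rule pairs rather than with configurations, exploiting the four identities already established: $\hat{\nu}\Phi^N_f = \Phi^N_{\hat{\nu}f}$, $\sigma\Phi^N_f = \Phi^{1+N}_f$, $\hat{r}\Phi^N_f = \Phi^{-N}_{\hat{r}f}$, and $\hat{s}_p\Phi^N_f = \Phi^{pN}_f$ for a positive integer $p$. Since these operators are well-defined on global maps regardless of the chosen representation, checking a commutation relation reduces to verifying that both sides transform an arbitrary representing pair $(N,f)$ identically. I would first reduce the rational statement to the integer case together with the case of the partial inverse $\hat{s}_q^{-1}$, and only then reassemble via $\hat{s}_v = \hat{s}_q^{-1}\hat{s}_p$.

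First I would dispose of the integer case $v=p$. Composing the atomic identities gives $\hat{\nu}\hat{s}_p\Phi^N_f = \Phi^{pN}_{\hat{\nu}f} = \hat{s}_p\hat{\nu}\Phi^N_f$ and $\hat{r}\hat{s}_p\Phi^N_f = \Phi^{-pN}_{\hat{r}f} = \hat{s}_p\hat{r}\Phi^N_f$, the latter using $-(pN)=p(-N)$ and the fact that scaling preserves the arity, so $\hat{r}f$ is the same local rule on both sides. For the shift the asymmetry surfaces: $\sigma^p\hat{s}_p\Phi^N_f = \Phi^{p+pN}_f$ whereas $\hat{s}_p\sigma\Phi^N_f = \Phi^{p(1+N)}_f$, and since $p+pN = p(1+N)$ these coincide, yielding $\sigma^p\hat{s}_p = \hat{s}_p\sigma$.

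Next I would record the matching relations for $\hat{s}_q^{-1}$. As $\hat{s}_q$ is an injective total operator on $\mathcal{C}_\mathcal{A}$, its partial inverse satisfies $\hat{s}_q^{-1}\Phi^{qM}_g = \Phi^M_g$ on its image, well-definedness following from the uniqueness of the irreducible representation (Proposition~\ref{prop:eq-cont}). Applying the same three computations to a pair $(qM,g)$ gives $\hat{\nu}\hat{s}_q^{-1} = \hat{s}_q^{-1}\hat{\nu}$, $\hat{r}\hat{s}_q^{-1} = \hat{s}_q^{-1}\hat{r}$, and $\hat{s}_q^{-1}\sigma^q = \sigma\hat{s}_q^{-1}$. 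Here I must confirm that $\hat{\nu}$, $\hat{r}$, and $\sigma^q$ each preserve divisibility of the canonical neighbourhood by $q$ (immediate, since $\hat{\nu}$ fixes $N$, $\hat{r}$ sends $N\mapsto -N$, and $\sigma^q$ sends $N\mapsto q+N$), so that both sides have the same domain.

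Finally I would assemble the rational case. For \ref{comm:pi-s} and \ref{comm:r-s} the operator $\hat{\nu}$, respectively $\hat{r}$, commutes past $\hat{s}_p$ and then past $\hat{s}_q^{-1}$, giving $\hat{\nu}\hat{s}_v = \hat{s}_v\hat{\nu}$ and $\hat{r}\hat{s}_v = \hat{s}_v\hat{r}$. For \ref{comm:sigma-s} I would compute $\hat{s}_v\sigma^q = \hat{s}_q^{-1}\hat{s}_p\sigma^q = \hat{s}_q^{-1}\sigma^{pq}\hat{s}_p = \sigma^p\hat{s}_q^{-1}\hat{s}_p = \sigma^p\hat{s}_v$, where $\hat{s}_p\sigma^q = \sigma^{pq}\hat{s}_p$ comes from iterating the integer shift relation and $\hat{s}_q^{-1}\sigma^{pq} = \sigma^p\hat{s}_q^{-1}$ from iterating its inverse $p$ times. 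I expect the main obstacle to be not any individual computation but the careful bookkeeping of the partiality of $\hat{s}_v$: at each composition one must check that the intermediate global maps lie in the domain of the next operator (equivalently that $vN$ stays integral, i.e. $q \mid pN$, along the whole chain), so that each stated identity holds as a genuine equality of partial maps with matching domains rather than merely on some common subdomain.
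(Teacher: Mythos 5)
Your proposal is correct and follows essentially the same route as the paper: both verify the relations on representing pairs $(N,f)$ via the identities $\hat{\nu}\Phi^N_f=\Phi^N_{\hat{\nu}f}$, $\hat{r}\Phi^N_f=\Phi^{-N}_{\hat{r}f}$, $\sigma\Phi^N_f=\Phi^{1+N}_f$, $\hat{s}_p\Phi^N_f=\Phi^{pN}_f$, establish the integer relation $\sigma^p\hat{s}_p=\hat{s}_p\sigma$ first, pass to the inverse relation $\hat{s}^{-1}_q\sigma^q=\sigma\hat{s}^{-1}_q$, and assemble the rational case through the decomposition $\hat{s}_v=\hat{s}^{-1}_q\hat{s}_p$. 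Your additional bookkeeping (well-definedness of $\hat{s}^{-1}_q$ via uniqueness of the irreducible representation, and the domain checks for the partial maps) only makes explicit what the paper's proof leaves implicit.
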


\begin{proof}

\ref{comm:pi-s}
The permutation of the state set is independent of the geometrical transformation of the neighbourhood, which
implies the asserted statements.

\ref{comm:r-s}
Suppose that $N$ is a neighbourhood and $vN$ is also a set of integers.
Then
\[
\hat{r} \hat{s}_v \Phi^N_f = \hat{r} \Phi^{vN}_f = \Phi^{-vN}_{\hat{r}f}
= \hat{s}_v \Phi^{-N}_{\hat{r} f} = \hat{s}_v \hat{r} \Phi^N_f.
\]

\ref{comm:sigma-s}
We first assume that $v=p$ is a positive integer.
Then
\[
\sigma^p \hat{s}_p \Phi^N_f = \sigma^p \Phi^{pN}_f = \Phi^{p(N+1)}_f = \hat{s}_p \Phi^{N+1}_f = \hat{s}_p \sigma \Phi^N_f.
\]
As consequence, the relation for the inverse operator reads $\hat{s}^{-1}_q \sigma^q = \sigma \hat{s}^{-1}_q$.
Hence
\[
\sigma^p \hat{s}_{p/q} = \sigma^p \hat{s}_p \hat{s}^{-1}_q = \hat{s}_p \sigma \hat{s}^{-1}_q
= \hat{s}_p \hat{s}^{-1}_q \sigma^q = \hat{s}_{p/q} \sigma^q.
\]
\end{proof}
The next proposition summarizes some of the statements we have already encountered.
\begin{prop}
If $N = [1,n]$, then
\begin{enumerate}[label=(\roman*)]
\item
$
|\overline{L}_\mathcal{A}(n)/\mathcal{S}_\mathcal{A}| = |\overline{G}_\mathcal{A}(N) / \hat{\mathcal{S}}_\mathcal{A}(n+1)|
= |\overline{G}_\mathcal{A}(N) / {\sim}| = |\overline{G}_\mathcal{A}(N) / {\cong}|;
$
\item
$
|L_\mathcal{A}(n)/\mathcal{S}_\mathcal{A}| = |G_\mathcal{A}(N) / \hat{\mathcal{S}}_\mathcal{A}(n+1)|;
$
\item
$
|\overline{G}_\mathcal{A}(N) / \hat{\mathcal{S}}_\mathcal{A}(n+1)| \le
|G_\mathcal{A}(N) / {\sim}| \le
|G_\mathcal{A}(N) / {\cong}| \le
|G_\mathcal{A}(N) / \hat{\mathcal{S}}_\mathcal{A}(n+1)|.
$
\end{enumerate}
\end{prop}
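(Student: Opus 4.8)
The plan is to read this proposition as a consolidation of facts established earlier, supplemented by a single genuinely new observation: that the scaling operator cannot merge two distinct classes of \emph{irreducible} rules carried by the interval $N=[1,n]$. I would begin by recording two properties of $N=[1,n]$. First, $N$ is reflection-symmetrical, since $N=(n+1)-N$, so $N\in\Sym$. Second, the assignment $f\mapsto\Phi^N_f$ is a bijection $L_\mathcal{A}(n)\to G_\mathcal{A}(N)$ that restricts to a bijection $\overline{L}_\mathcal{A}(n)\to\overline{G}_\mathcal{A}(N)$ and intertwines $\hat{\nu}$ with $\hat{\nu}$ and the generator $\sigma^{n+1}\hat{r}$ with $\hat{r}$. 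This gives (ii) at once (it is exactly the finite-interval correspondence already derived) and, on the irreducible part, the first equality of (i).

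For the remaining equalities in (i) I would show that on $\overline{G}_\mathcal{A}(N)$ the $\hat{\mathcal{S}}_\mathcal{A}(n+1)$-orbits, the $\cong$-classes, and the $\sim$-classes all coincide. Because $N\in\Sym$, Proposition~\ref{prop:global-eq}\ref{prop:global-eq-1} describes the $\cong$-class of $\Phi^N_f$ as $\{\Phi^M_g \mid M\cong N,\ g\in[f]\}$; intersecting with $\overline{G}_\mathcal{A}(N)$ (neighbourhood exactly $N$, rule irreducible) leaves precisely $\{\Phi^N_g \mid g\in[f]\}$, which is also the $\hat{\mathcal{S}}_\mathcal{A}(n+1)$-orbit. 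Since $\cong\,\subseteq\,\sim$, it then remains to prove the reverse inclusion for $\sim$: if $\Phi^N_f\sim\Phi^N_g$ with $f,g$ irreducible, then $\Phi^N_f\cong\Phi^N_g$. Unwinding the definition of $\sim$ (if necessary collapsing a chain to a single scaling via Proposition~\ref{commutators} and the multiplicativity $\hat{s}_v\hat{s}_w=\hat{s}_{vw}$), there is a positive rational $v$ with $\Phi^N_g\cong\hat{s}_v\Phi^{A'}=\Phi^{vM}_h$, where $\Phi^{M}_h$ is the irreducible representation of a map $\cong$-equivalent to $\Phi^N_f$, so $M\cong N$ is an interval of $n$ consecutive integers and $h$ is irreducible. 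The minimal neighbourhood of the right side is $vM$, and $\Phi^N_g\cong\Phi^{vM}_h$ forces $vM\cong N$.

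The crux is now purely combinatorial: scaling an interval of $n\ge 2$ consecutive integers by $v=p/q$ yields an interval only when $v=1$, since two consecutive scaled values force $q=1$ while a common difference $p>1$ destroys contiguity. Hence $v=1$ and $\Phi^N_f\cong\Phi^N_g$; the degenerate cases $n\le 1$ are immediate, as scaling the empty set or a singleton lands on a congruent neighbourhood and is absorbed by $\cong$. This interval-contiguity step is the one place where something must genuinely be proved rather than quoted, and it is exactly what fails for general non-interval neighbourhoods, so I expect it to be the main obstacle.

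Finally, (iii) is assembled from monotonicity. The middle inequality $|G_\mathcal{A}(N)/{\sim}|\le|G_\mathcal{A}(N)/{\cong}|$ follows because $\cong\,\subseteq\,\sim$ makes $\sim$ coarser. The right inequality $|G_\mathcal{A}(N)/{\cong}|\le|G_\mathcal{A}(N)/\hat{\mathcal{S}}_\mathcal{A}(n+1)|$ follows because $\hat{\mathcal{S}}_\mathcal{A}(n+1)\le\mathcal{T}_\mathcal{A}$, so the $\hat{\mathcal{S}}_\mathcal{A}(n+1)$-orbit relation is finer than $\cong$. The left inequality combines the equality $|\overline{G}_\mathcal{A}(N)/\hat{\mathcal{S}}_\mathcal{A}(n+1)|=|\overline{G}_\mathcal{A}(N)/{\sim}|$ from part (i) with the elementary fact that restricting an equivalence relation to the subset $\overline{G}_\mathcal{A}(N)\subseteq G_\mathcal{A}(N)$ induces an injection $\overline{G}_\mathcal{A}(N)/{\sim}\hookrightarrow G_\mathcal{A}(N)/{\sim}$. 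Everything outside the scaling step is bookkeeping over the already-established local-to-global correspondence.
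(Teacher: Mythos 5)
Your proposal is correct, and it is worth noting that the paper itself gives \emph{no} proof of this proposition: it is introduced only as a summary of ``statements we have already encountered.'' Indeed, part (ii), the first equality of (i), and the two right-hand inequalities of (iii) are direct consequences of earlier material (the bijection $f \mapsto \Phi^N_f$ intertwining $\mathcal{S}_\mathcal{A}$ with $\hat{\mathcal{S}}_\mathcal{A}(n+1)$, the containment of $\hat{\mathcal{S}}_\mathcal{A}(n+1)$ in $\mathcal{T}_\mathcal{A}$, and the inclusion of relations $\cong\,\subseteq\,\sim$), exactly as you assemble them. The one statement that is \emph{not} quotable from the paper is the pair of equalities $|\overline{G}_\mathcal{A}(N)/{\sim}| = |\overline{G}_\mathcal{A}(N)/{\cong}| = |\overline{G}_\mathcal{A}(N)/\hat{\mathcal{S}}_\mathcal{A}(n+1)|$; the paper leaves this implicit (it lurks behind the unproved ``slight adaptation'' yielding Proposition~\ref{prop:global-scaling}), and you correctly isolate it as the step requiring a genuine argument. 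Your argument for it is sound: since $[1,n]=(n+1)-[1,n] \in \Sym$, Proposition~\ref{prop:global-eq}\ref{prop:global-eq-1} together with the uniqueness of irreducible representations (Proposition~\ref{prop:uniqueness}) reduces a $\sim$-relation inside $\overline{G}_\mathcal{A}(N)$ to the existence of a lowest-terms rational $v=p/q$ and an interval $M \cong N$ of $n$ consecutive integers with $vM$ again such an interval; for $n\ge 2$, integrality of $vM$ at two consecutive points forces $q=1$, and contiguity of $vM$ forces $p=1$, so $v=1$ and $\sim$ collapses to $\cong$ there, while $n\le 1$ is immediate. Combined with the injection $\overline{G}_\mathcal{A}(N)/{\sim} \hookrightarrow G_\mathcal{A}(N)/{\sim}$ this also gives the left inequality of (iii). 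Two small remarks: your parenthetical appeal to multiplicativity $\hat{s}_v\hat{s}_w=\hat{s}_{vw}$ and chain-collapsing is unnecessary under the paper's definition of $\sim$, which involves a single scaling; and your claim that $f \mapsto \Phi^N_f$ is injective on all of $L_\mathcal{A}(n)$ (not just on irreducible rules) deserves the one-line justification that distinct rules on the same neighbourhood disagree on some configuration. Overall your reconstruction matches the paper's intent but is strictly more complete than anything the paper writes down.
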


\subsection{Elementary Cellular Automata}
Recall that elementary CAs are precisely those binary CAs with the neighbourhood $N = [-1,1]$,
and that their global rule space is denoted by $G_2(N)$.
We set $\mathcal{E} = G_2(N)$.
In this subsection we examine in more detail the partitions of $\mathcal{E}$ that arise from various equivalence relations.
It is well-known that under reflection and permutation,  the set $\mathcal{E}$ decompose into 88 equivalence classes,
in symbols, $|\mathcal{E} / \mathcal{S}_2| = 88$.
All equivalence relations discussed here will be coarser than the partition $\mathcal{E} / \mathcal{S}_2$.
In general, if $R$ and $S$ are equivalence relations on a set $A$, and $R$ is coarser than $S$,
then each equivalence class $C$ of $A / R$ is either itself an equivalence class of $A / S$ or the union of
two or more equivalence classes from $A / S$.
We specify the members of equivalence classes using their Wolfram codes.
Whenever a number $c$ appears in one of the equivalence classes below, it is to be interpreted as the global map $\Phi^{N}_{W^3_c}$.
Every equivalence relation considered in what follows is obtained by merging two or more of the following equivalence classes of
$\mathcal{E} / \mathcal{S}_2$ into larger ones.
\begin{align*}
& A_1 = \{3,17,63,119\}, & & A_2=\{5,95\}, & & A_3 = \{10,80,175,245\}, \\
& A_4 = \{12,68,207,221\}, & & A_5=\{15,85\}, & & A_6=\{34,48,187,243\}, \\
& A_7 = \{51\}, & & A_8=\{60,102,153,195\}, & & A_9 = \{90, 165\}, \\
& A_{10} = \{136, 192, 238, 252\}, & & A_{11}=\{160,250\}, & & A_{12}=\{170,240\}, \\
& A_{13}=\{204\}, & &  A_{14} = \{77\},  & & A_{15} = \{232 \}, \\
& A_{16} = \{ 23 \}, & & A_{17} = \{178\}, & & A_{18}=\{105\}, \\
& A_{19} = \{150\}. & & & &
\end{align*}
The first further equivalence relation that we will consider is the congruence relation $\cong$ that
adds the shift operator to permutation and reflection.
By Eq.~(\ref{eq:classes85}) this equivalence relation induces 85 equivalence classes.
The equivalence classes of $\mathcal{E} / \mathcal{S}_2$ that are merged to obtain the coarser equivalence classes of
$\mathcal{E} / {\cong}$ are the following, see also~\cite{RUIVO2018280}.
\begin{align*}
& B_1 = A_4 \cup A_6, & & B_2 = A_5 \cup A_7, & & B_3 = A_{12} \cup A_{13}.
\end{align*}
Next, we consider the equivalence relation $\sim$ on $\mathcal{E}$, which adds scaling to the set of
symmetry operations.
By Eq.~(\ref{eq:classes81}) the number of equivalence classes decreases to 81.
The equivalence classes of $\mathcal{E} / {\sim}$ that do not coincide with the equivalence classes of
$\mathcal{E} / \mathcal{S}_2$ are the following.
\begin{align*}
& C_1 = A_3 \cup A_4 \cup A_6, & & C_2=A_5 \cup A_7, & & C_3=A_8 \cup A_9,  \\
& C_4 = A_{10} \cup A_{11}, & & C_5 = A_1 \cup A_2, & & C_6 = A_{12} \cup A_{13}.
\end{align*}
Epperlein \cite{epperlein2015classification, epperlein2017}
classified the elementary CAs up to topological conjugacy.
Permutation and reflection are general topological conjugacies and yield
the well-known 88 equivalence classes.
Epperlein demonstrated that, in addition, there exist further topological conjugacies between
specific pairs of elementary CAs, and that apart from these there are no others.
The pairs
\begin{align*}
& ([15], [170]), & & ([23], [178]), & & ([77],[232])
\end{align*}
of equivalence classes of $\mathcal{E} / \mathcal{S}_2$ are conjugated
by
\[
\vartheta: \{0,1\}^\mathbb{Z} \to \{0,1\}^\mathbb{Z},  \ \
\vartheta(x)_i =
\begin{cases}
cx_i & \text{if $i$ is odd,} \\
x_i & \text{if $i$ is even.}
\end{cases}
\]
The equivalence classes $[90]$, $[105]$, and $[150]$ of $\mathcal{E} / \mathcal{S}_2$ are
conjugated to each other due to a result by Kurka~\cite{Kurka2003TopologicalAS}.
Let us denote the equivalence classes of the elemenary CAs that are obtained by taking
all topological conjugacies into account by $\mathcal{E} / \mathrm{TC}$.
Consequently, the equivalence classes of $\mathcal{E} / \mathrm{TC}$ that differ from equivalence classes of $\mathcal{E}/ \mathcal{S}_2$
are the following.
\begin{align*}
& D_1 = A_5 \cup A_{12}, & & D_2 = A_{14} \cup A_{15}, & & D_3 = A_{16} \cup A_{17},
& & D_4 = A_{9} \cup A_{18} \cup A_{19}.
\end{align*}
The partition $\mathcal{E}/\mathrm{TC}$ thus contains 83 equivalence classes.
Finally, we consider the equivalence classes obtained from $\mathcal{E} / \mathrm{TC}$
when we additionally incorporate shifting and scaling (note that permutation and reflection are already topological conjugacies).
We obtain these by taking the transitive closure of $\mathcal{E} / {\sim}$ and $\mathcal{E} / \mathrm{TC}$.
Let $\mathcal{E} / \Omega$ denote the resulting partition.
Then the equivalence classes of $\mathcal{E}/ \Omega$ that are distinct from those of
$\mathcal{E} / \mathcal{S}_2$ are the following.
\begin{align*}
& E_1 = A_3 \cup A_4 \cup A_6, & & E_2=A_5 \cup A_7 \cup A_{12} \cup A_{13}, & & E_3=A_8 \cup A_9 \cup A_{18} \cup A_{19}, \\
& E_4 = A_{10} \cup A_{11}, & & E_5 = A_1 \cup A_2, & & E_6 = A_{14} \cup A_{15}, \\
& E_7 = A_{16} \cup A_{17}. & & & &
\end{align*}
We see that 19 equivalence classes of $\mathcal{E} / \mathcal{S}_2$ are merged into
7 equivalence classes of $\mathcal{E} / \Omega$.
The other equivalence classes of $\mathcal{E} / \Omega$ are also equivalence classes of $\mathcal{E} / \mathcal{S}_2$.
Consequently, $\mathcal{E} / \Omega$ contains $88 - 19 + 7 = 76$ equivalence classes.

\section{Extended Symmetries and the Curtis--Hedlund--Lyndon Framework}
\label{sec:extended}

The classification developed in Sections~\ref{sec:loc} and \ref{sec:glob} relies on a specific group of symmetry operators, $\mathcal{T}_{\mathcal{A}}$, generated by state permutations, lattice shifts, and reflections. Combined with the scaling operator $\hat{s}_v$, these transformations constitute the \emph{geometric} symmetries of cellular automata. In this section, we briefly discuss how this classification relates to broader notions of equivalence in topological dynamics, specifically topological conjugacy and the Curtis--Hedlund--Lyndon (CHL) framework.

\subsection{Implementing vs. Preserving Symmetries}
We recall the Curtis--Hedlund--Lyndon characterization: a map $\Phi : A^{\mathbb{Z}} \to A^{\mathbb{Z}}$ is the global map of a cellular automaton if and only if it is continuous and commutes with the shift $\sigma$. Such maps (sliding block codes) are the morphisms of the symbolic dynamical system. A symmetry $S$ can relate to this framework in three ways:
\begin{itemize}
  \item \textbf{CHL-implementing:} $S$ itself acts on configurations $A^{\mathbb{Z}}$ as a sliding block code. Standard examples are the shift operator $\sigma$ and state permutations $\hat{\nu}$.
  \item \textbf{CHL-preserving:} $S$ is a transformation of the space of global maps that sends CAs to CAs, but $S$ itself may not be a valid CA global map. The primary example is the reflection operator $\hat{r}$, defined by $(\hat{r}x)_i = x_{-i}$. While $\hat{r}$ is a homeomorphism, it anti-commutes with the shift ($\hat{r}\sigma = \sigma^{-1}\hat{r}$) and thus is not a CA. However, conjugating a CA by $\hat{r}$ yields another CA.
  \item \textbf{Rule-space operator:} $S$ acts directly on local rules but has no simple expression as a configuration map. Diagonal shifts (e.g., adding a constant to the rule output) fall into this category.
\end{itemize}

\subsection{Topological Conjugacy}
In symbolic dynamics, two global maps $\Phi$ and $\Psi$ are \emph{topologically conjugate} if there exists a homeomorphism $h: \mathcal{A}^{\mathbb{Z}} \to \mathcal{A}^{\mathbb{Z}}$ such that $h \circ \Phi = \Psi \circ h$. If $h$ also commutes with the shift, the CHL theorem implies that $h$ is a reversible cellular automaton.
Two of the symmetry operators, permutation and reflection, are topological conjugacies.
The shift is on the one hand a conjugacy as $\sigma \circ \Phi = \Phi \circ \sigma$,
on the other hand, it can also be seen as an operator $\mathcal{C}_\mathcal{A} \to \mathcal{C}_\mathcal{A}$.
In this second meaning, it is no longer a conjugacy.
The scaling map $s_p$ is a conjugacy between the dynamical systems
 $((\mathcal{A}^\mathbb{Z})^p, \Phi)$ and $(\mathcal{A}^\mathbb{Z}, \hat{s}_p \Phi)$.
Consequently,
the equivalence relation induced by all topological conjugacies $A^\mathbb{Z} \to A^\mathbb{Z}$
and the equivalence relation induced by all symmetry operators are not comparable, that is,
neither is finer or coarser than the other.

\subsection{Dynamical Extensions}
The group $\mathcal{T}_{\mathcal{A}}$ acts on the space of global maps $\mathcal{C}_{\mathcal{A}}$ primarily by transforming the spatial domain. It does not account for symmetries involving time or information loss:
\begin{enumerate}
    \item \textbf{Time Reversal:} If a CA $\Phi$ is reversible, its inverse $\Phi^{-1}$ is also a CA. The equivalence relation defined by $\mathcal{T}_{\mathcal{A}}$ does not generally equate $\Phi$ with $\Phi^{-1}$.
    \item \textbf{Coarse-graining:} Unlike the exact scaling operator $\hat{s}_p$, coarse-graining transformations (often used in renormalization group analysis) involve projections $\pi: A^p \to B$ that may lose information. These establish semi-conjugacies rather than the strict equivalences considered here.
\end{enumerate}

\subsection{Rigidity of Geometric Symmetries}
One might ask why we restrict attention to shifts, reflections, and permutations. A rigidity phenomenon appears if we focus on geometric index-symbol relabelings of the form $(Hx)_i = \pi(x_{\tau(i)})$ and demand that conjugation by $H$ always sends CAs to CAs. Under the CHL constraints (locality, translation invariance), it can be shown that the index map $\tau$ must be affine ($\tau(i) = ai+b$).
Thus, the only geometric symmetries that strictly preserve the CA class
if the neighbourhood has at least two elements, are built from shifts, reflections, state permutations, and (with suitable encodings) scalings.
If the neighbourhood is a singleton, then $\tau$ can be any bijection, in
accordance with the fact that an affine transformation is determined by at least two points.

Conceptually, extending the notion of symmetry to allow \emph{any} computable map on rule space would trivialise the classification, collapsing distinct dynamical behaviours into single classes. The CHL-compatible symmetries form a principled, resource-constrained class of transformations that keep the symmetry concept meaningful and nontrivial.

\section{Summary}
\label{sec:sum}

In this work, we have established a comprehensive framework for classifying one-dimensional cellular automata through the lens of symmetry. By rigorously defining the actions of state permutation, lattice reflection, translation (shift), and neighbourhood scaling, we have derived a hierarchy of equivalence relations that progressively reduce the complexity of the cellular automaton rule space.

A central insight of this study lies in the operational nature of the four symmetries considered: permutation, reflection, scaling, and shift.
The operations of \textit{state permutation} and \textit{lattice reflection} act as topological conjugacies. If two global maps $\Phi$ and $\Psi$ are related by such a symmetry $h$, they satisfy the conjugacy relation $h \circ \Phi = \Psi \circ h$. Consequently, the dynamics of $\Phi$ can be fully reconstructed from $\Psi$ via the relation $\Phi = h^{-1} \circ \Psi \circ h$. This extends to arbitrary time steps $t$, where the $t$-th iterate is given by
\[
\Phi^t = h^{-1} \circ \Psi^t \circ h.
\]
This property is fundamental: it allows one to transform an initial configuration, evolve it using the equivalent CA $\Psi$ for $t$ steps, and transform the result back to obtain the state of $\Phi$ at time $t$.

\textit{Scaling}, however, differs structurally from these isometries. It does not constitute a topological conjugacy from $A^{\mathbb{Z}}$ to $A^{\mathbb{Z}}$ in the standard sense. Instead, it relates the dynamics on the original lattice to a decimated dynamics on a coarse-grained lattice, mapping $A^{\mathbb{Z}} \to (A^{\mathbb{Z}})^p$. While not a standard automorphism of the shift dynamical system, it represents a meaningful computational equivalence based on interleaving and decimation.

The \textit{shift} operator occupies a unique position in this framework. While it is not a topological conjugacy between distinct rules in the static sense (a rule commutes with the shift rather than being transformed by it), it is strictly necessary to define the symmetry group $\mathcal{T}_{\mathcal{A}}$ fully. Specifically, the shift allows for a unified definition of reflection symmetries regardless of whether the center of symmetry falls on a cell or between cells. Furthermore, this work emphasizes the necessity of defining these operations on \textit{global maps} rather than local rules.
Previous approaches in connection with equivalence classes (e.g., Ruivo et al.~\cite{RUIVO2018280}) defined shifts on local rules, which is theoretically restrictive; one cannot shift a finite neighbourhood indefinitely without encountering boundary issues or redefining the rule's domain. By lifting the analysis to global maps $\Phi: A^{\mathbb{Z}} \to A^{\mathbb{Z}}$, the shift becomes a well-defined automorphism $\sigma$, and the equivalence relation becomes robust.

Methodologically, the concept of \textit{irreducibility} proved to be the canonical tool for navigating these symmetries. We demonstrated that every global map admits a unique representation by an irreducible local rule. This bijection allowed us to apply combinatorial counting techniques---specifically the Principle of Inclusion-Exclusion and classification by stabilizer types---to derive exact formulae for the number of distinct cellular automata.

Our quantitative results for binary cellular automata include:
\begin{itemize}
    \item A derivation of the number of irreducible local rules for any arity $n$.
    \item A complete classification of irreducible binary rules by their stabilizer subgroups (types) under the Klein four-group action of permutation and reflection.
    \item The enumeration of equivalence classes for global maps induced by contiguous neighbourhoods.
Notably, for the elementary cellular automata (neighbourhood size 3), the inclusion of the scaling operator reduces the classical count of 88 equivalence classes (under permutation and reflection) to \textit{81 equivalence classes}.
The number decreases to \textit{76 equivalence classes} if the remaining topological conjugacies
between specific pairs of elementary cellular automata are taken into account.
\end{itemize}

Ultimately, this symmetry-based classification serves as a coarse-graining of the rule space that complements
distinctions such as topological conjugacy.
By identifying rules that are structurally identical up to coordinate transformation, decimation, or relabeling,
we provide a clearer map of the universe of one-dimensional cellular automata.

\textbf{Credit authorship contribution statement:}
Martin Schaller: Writing---original draft, Visualization, Software, Investigation, Formal analysis, Conceptualization. Karl Svozil: Methodology, Investigation, Conceptualization.

\textbf{Declaration of competing interest:}
The authors declare that they have no known competing financial interests or personal relationships that could have appeared to influence the work reported in this paper.

\textbf{Data Availability:} No data was used for the research described in the article.

\textbf{Acknowledgements:}
This research was funded in whole or in part by the Austrian Science Fund (FWF) [Grant DOI:10.55776/PIN5424624]. The authors acknowledge TU Wien Bibliothek for financial support through its Open Access Funding Programme.

\bibliographystyle{plain}
\bibliography{mybib}

\end{document}